\newtheorem{thm}{Theorem}[section]
\newtheorem{prop}{Proposition}[section]
\newtheorem{lem}{Lemma}[section]
\newtheorem{rem}{Remark}[section]
\newcommand{\R}{\mathbb{R}}
\newcommand{\supp}{{\rm supp}}
\def\ep{\varepsilon}
\def\Fish{\mathcal E_{\rm kin}}
\def\mycolor(#1){%
   \ifnum#1=2
     green%
   \else
   \ifnum#1=3
     blue%
   \else
   \ifnum#1=4
     red%
   \else
   \ifnum#1=5
     violet%
   \else
   \ifnum#1=6
     yellow%
   \else
   \ifnum#1=7
     purple%
   \else
     black%
   \fi
   \fi
   \fi
   \fi
   \fi
   \fi
   }
\begin{document}
\author{S. Di Marino, A. Gerolin, L. Nenna}
\address{Simone Di Marino, Universit\`a degli studi di Genova (DIMA), MaLGA, Genova, Italy}
\email{simone.dimarino@unige.it}
\address{Augusto Gerolin, Department of Mathematics and Statistics $\&$ Department of Chemistry and Biomolecular Sciences, University of Ottawa, Ottawa, Canada}
\email{agerolin@uottawa.ca}
\address{Luca Nenna, Universit\'e Paris-Saclay, CNRS, Laboratoire de math\'ematiques d'Orsay, 91405, Orsay, France.} 
\email{luca.nenna@universite-paris-saclay.fr}
\title{Universal diagonal estimates for minimizers of the Levy-Lieb functional}
\date{}
\begin{abstract}
\noindent
Given a  wave-function minimizing the Levy-Lieb functional, the intent of this short note is to give an estimate of the probability of the particles being in positions $(x_1, \ldots, x_N)$ in the $\delta$-close regime $D_{\delta}= \cup_{i \neq j} \{|x_i - x_j| \leq \delta\}$.
\end{abstract}
\maketitle
\tableofcontents
\section{Introduction}

Density Functional Theory attempts to describe all the relevant information about a many-body quantum system at ground state in terms of  the one electron density $\rho$. 
Following the Levy and Lieb's approach \cite{Levy-79,Lieb-83b} the ground state energy can be rephrased as the following variational principle involving only the electron density 

$$ \mathcal E_0[v] = \inf_{\substack{\sqrt{\rho}\in H^1(\R^3)\\ \rho(\R^3)=N \\ \int_{\R^3} v (x) \, d \rho<+\infty }} \left\{  \mathcal  F_{LL,\ep}[\rho] +  \int_{\R^3} v (x) \, d \rho \;\right\},$$
where $v$ is an external potential and the Levy-Lieb functional $\mathcal F_{LL,\ep}$ is defined as

\begin{equation}
\label{Fll}
\mathcal F_{LL,\ep}[\rho]:=\min_{\substack{ \psi \in\mathcal H^N \\ \psi \mapsto \rho}}\left\{ \int_{\R^{3N}}\ep|\nabla\psi|^2 (x)+v_{ee}(x) | \psi|^2 (x)  \, d x  \right\},
\end{equation}
where $\mathcal H^N=\bigwedge_{i=1}^N H^1 ( \R^{3}; \mathbb{C})$ is the fermionic space, $v_{ee} (x_1, \ldots, x_N)= \sum_{i <j} \frac 1{|x_i-x_j|}$ is the Coulomb interaction potential between the $N$ electrons and $\psi\mapsto \rho$ means that the one-body density of $\psi$ is $\rho$, that is $\rho=N\int_{\mathbb{R}^{3(N-1)}}|\psi|^2$. 

The Levy-Lieb functional is indeed the lowest possible (kinetic plus interaction) energy of a quantum system having the prescribed density $\rho$. This universal functional is the central object of Density Functional Theory, since knowing it would allow one to compute the ground state energy of a system with any external potential $v$. For a complete review on it we refer the reader to \cite{lewin2019universal}. 

Although the electrons are \textit{fermions}, in this article we will treat only the case of \textit{bosonic} wave-functions, i.e. consider in the constraint search in \eqref{Fll} wave-functions $\psi$ that are symmetric rather than anti-symmetric. Notice that the ground-state energy of \textit{bosonic} systems are generally higher than \textit{fermionic} ones. In our analysis, however, the \textit{bosonic} case is not very restrictive since we are looking at the regime $\ep$ small.

Our approach interprets the Levy-Lieb functional as a (Fisher-information regularized) multi-marginal optimal transport problem.
\vspace{2mm}

\noindent
\textit{Connection with Optimal Transportation Theory:} It has been recently shown that the limit functional as $\ep \to 0$ corresponds to a multi-marginal optimal transport problem \cite{BinPas-17,CotFriKlu-13,CotFriKlu-18,Lewin-18} (see also the seminal works in the physics and chemistry literature \cite{ButPasGor-12,Seidl-99,SeiPerLev-99,SeiGorSav-07,SeiMarGerNenGieGor-17}): rather than wave-functions, one has now enlarged the constrained search in \eqref{Fll} to minimize among probability measures on $\R^{3N}$ having $\rho$ as marginal, that is
\begin{equation}
\label{eq:MK}
\mathcal F_0[\rho]:=\inf_{\mathbb P\in\Pi_N(\rho)}\left\{ \int_{\R^{3N}}v_{ee}(x_1,\ldots,x_N)\, d\mathbb P(x_1,\ldots,x_N)\right\},
\end{equation}
where $\Pi_N(\rho)$ denotes the set of probability measures on $\R^{3N}$ having $\rho/N$ as marginals. 

The multi-marginal optimal transport with Coulomb cost \eqref{eq:MK} has garnered attention in the mathematics, physics and chemistry communities and the literature on the subject is growing considerably. Recent developments include results on the existence and non-existence of Monge-type solutions minimizing \eqref{eq:MK} (e.g., \cite{ColMar-15,ColPasMar-15,CotFriKlu-13,Friesecke-19,ButPasGor-12,ColStr-M3AS-15,DiMGerNenSeiGor-xxx-15,BinDepKau-arxiv-20}), structural properties of Kantorovich potentials (e.g., \cite{ColMarStr-19, MarGerNen-17,GerKauRaj-ESAIMCOCV-19,ButChaPas-17}), grand-canonical optimal transport \cite{MarLewNen-19_ppt}, efficient computational algorithms (e.g., \cite{BenCarNen-SMCISE-16,FriSchVoe-21, CoyEhrLom-19, MarGer-19_ppt, KhoLinLinLex-19_ppt}) and the design of new density functionals (e.g., \cite{GerGroGor-JCTC-20, CheFri-MMS-15, MirUmrMorGor-JCP-14, LanMarGerLeeGor-PCCP-16}). The first order expansion around the limit $\ep\to0$ of the Levy-Lieb functional was obtained in \cite{ColMarStr-21_ppt}.  

We refer to the surveys (and references therein) \cite{MarGerNen-17,FriGerGor-arxiv-22} for a self-contained presentation on multi-marginal optimal transport approach in Density Functional Theory as well as the review article \cite{VucGer-2022} for a the recent developments from a chemistry standpoint.\vspace{2mm}

\noindent
\textbf{Main result of this paper:} In \cite{Pascale-15,ButChaPas-17,ColMarStr-19} it is shown that the support $\supp(\mathbb P^*)$ of a solution $\mathbb P^*$ of the limiting problem \eqref{eq:MK} is uniformly bounded away of the diagonal, i.e. one has always $| x_i-x_j| \geq \delta >0$ for any $x_i,x_j \in \supp(\mathbb P^*)$. In other words, the electrons are always at a certain distance away from each other, which is the expected behaviour since we are in a \emph{classical} framework.

In the sequel we will denote with $D_{\delta}$ the enlarged diagonal
$$D_{\delta} = \{ (x_1, \ldots, x_N) \in \R^{3N} \; :  \; \exists~ i \neq j \text{ s.t. } | x_i - x_j | \leq \delta \}. $$
In particular the result in \cite{ButChaPas-17,ColMarStr-19} can be rephrased saying that the solution to the multi-marginal optimal transport problem is concentrated on the complement of $D_{\delta}$ for some $\delta$.  An important feature of the results is that $\delta$ depends only on concentration properties of $\rho$. In fact defining 
$$\kappa(\rho, r):= \sup_{x \in \R^3} \rho ( B(x,r))/N, $$
the authors in \cite{ColMarStr-19} prove that if $\kappa(\rho, \beta) < \frac 1{2(N-1)}$ then one can choose $\delta = \frac \beta{2N}$. Our main result is to extend this property also for  $\ep>0$ small. In particular we do not expect to have $\psi_{\ep} =0$ on $D_{\delta}$ but we show that the probability of having the electrons in position $x \in D_{\delta}$ is very small \eqref{eqn:polysmall}.
\begin{thm}[Exponential off-diagonal localization for Coulomb]
\label{thm:mainThm}
Let $\psi_\ep$ be a minimizer for \eqref{Fll} where $v_{ee} (x_1, \ldots, x_N)= \sum_{i <j} \frac 1{|x_i-x_j|}$.
Let us consider $\beta$ such that $\kappa(\rho,\beta)\leq \frac{1}{4(N-1)}$ then, let $\alpha\leq \frac \beta{32N}$, and suppose $\ep N^2 \ll \alpha/16$. Then, for $\mathbb{P}_{\ep}(x) = | \psi_{\ep}|^2(x)$ we have
\begin{equation}\label{eqn:polysmall} 
\boxed{\int_{ D_{ \alpha}} \mathbb{P}_\ep(x)\, dx \leq e^{ - \frac 1{24}\sqrt{\frac{ \alpha}{\ep}}}.}
 \end{equation}
\end{thm}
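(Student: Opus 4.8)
The plan is to read off an Euler--Lagrange equation for $\psi_\ep$ and then run an Agmon-type weighted estimate whose profile is forced by the Coulomb singularity. The scale $\sqrt{\alpha/\ep}$ is the expected one: a minimizer should decay on the enlarged diagonal like $e^{-\mathrm{dist}_{\mathrm{Ag}}(\cdot,D_\alpha^{c})/\sqrt\ep}$, where $\mathrm{dist}_{\mathrm{Ag}}$ is the Agmon (conformal) distance with weight $\sqrt{v_{ee}}\asymp r^{-1/2}$, $r(x):=\min_{i<j}|x_i-x_j|$; since $\int_0^{\alpha}r^{-1/2}\,dr=2\sqrt\alpha$, the Agmon distance from the diagonal to $\partial D_\alpha$ is of order $\sqrt\alpha$, which after division by $\sqrt\ep$ and squaring (we work with $\psi_\ep^2$) gives $\sqrt{\alpha/\ep}$.

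\emph{Step 1 (reduction to a Schrödinger equation).} One may assume $\psi_\ep\ge 0$, since replacing $\psi_\ep$ by $|\psi_\ep|$ changes neither $\mathbb P_\ep$ nor the constraint and does not increase $\int\ep|\nabla\psi|^2$. Writing $u=\psi^2$ and using $\ep\int|\nabla\psi|^2=\tfrac\ep4\int|\nabla u|^2/u$, problem \eqref{Fll} is the minimization of the convex functional $u\mapsto\tfrac\ep4\int|\nabla u|^2/u+\int v_{ee}\,u$ over the convex set of probability densities on $\R^{3N}$, and by convexity of the Fisher information the minimum is unchanged if one drops symmetry and merely imposes that every one-body marginal of $u$ equal $\rho/N$ (replace $u$ by its symmetrization). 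The first-order optimality condition then says that $\mathbb P_\ep$ is Kantorovich-optimal for the cost $c_\ep:=v_{ee}-\ep\,\Delta\psi_\ep/\psi_\ep$; hence there is a potential $w_\ep\colon\R^{3}\to\R$ (the multiplier of the density constraint, common to all marginals by symmetry) with $\sum_i w_\ep(x_i)\le c_\ep$, with equality on $\supp\mathbb P_\ep$, so that, $\psi_\ep$ being the strictly positive bosonic ground state,
\[
-\ep\,\Delta\psi_\ep+\Big(v_{ee}(x)-\textstyle\sum_{i=1}^N w_\ep(x_i)\Big)\psi_\ep=0 .
\]

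\emph{Step 2 (the weighted estimate).} The quantitative input is that $w_\ep$ is bounded, uniformly in $\ep$, by a constant depending on $\rho$ only through $\kappa(\rho,\beta)$ --- the analogue for the mildly perturbed cost $c_\ep$ of the Kantorovich-potential bounds underlying \cite{Pascale-15,ButChaPas-17,ColMarStr-19} --- and that, under $\kappa(\rho,\beta)\le\frac1{4(N-1)}$ and $\alpha\le\beta/(32N)$, this forces the effective potential $W_\ep:=v_{ee}-\sum_i w_\ep(x_i)$ to satisfy $W_\ep\ge-C(\rho)$ on $\R^{3N}$ and, since $v_{ee}(x)\ge 1/r(x)\ge 1/(2\alpha)$ dominates $\sum_i w_\ep(x_i)$ on $D_{2\alpha}$, $W_\ep\ge\tfrac1{2r}$ there. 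Now take the bounded Lipschitz weight $F(x):=\tfrac12\big(\sqrt{2\alpha}-\sqrt{r(x)}\big)_+$; as $r$ is $\sqrt2$-Lipschitz, $|\nabla F|^2\le\tfrac1{8r}$ on $D_{2\alpha}$ and $F\equiv 0$, $\nabla F\equiv 0$ off $D_{2\alpha}$. Testing the equation for $\psi_\ep$ against $e^{2F/\sqrt\ep}\psi_\ep$ (the integration by parts being legitimate since $F$ is bounded Lipschitz, $\psi_\ep\in H^1\cap H^2_{\mathrm{loc}}$ and $v_{ee}\in L^2_{\mathrm{loc}}(\R^{3N})$) and using $-\ep\int e^{2F/\sqrt\ep}\psi\Delta\psi=\ep\int|\nabla(e^{F/\sqrt\ep}\psi)|^2-\int|\nabla F|^2e^{2F/\sqrt\ep}\psi^2$ yields the Agmon identity
\[
\ep\int\big|\nabla\big(e^{F/\sqrt\ep}\psi_\ep\big)\big|^2+\int\big(W_\ep-|\nabla F|^2\big)\,e^{2F/\sqrt\ep}\psi_\ep^2=0 .
\]
Discarding the first term and splitting the integral over $D_{2\alpha}$ (where $W_\ep-|\nabla F|^2\ge\tfrac3{8r}\ge\tfrac3{16\alpha}$) and its complement (where $e^{2F/\sqrt\ep}=1$ and $-W_\ep\le C(\rho)$) gives $\int_{D_{2\alpha}}e^{2F/\sqrt\ep}\psi_\ep^2\le\tfrac{16}{3}\,\alpha\,C(\rho)$, a fixed constant under our hypotheses. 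Since $F\ge\tfrac12(\sqrt{2\alpha}-\sqrt\alpha)>\tfrac15\sqrt\alpha$ on $D_\alpha\subset D_{2\alpha}$, we conclude
\[
\int_{D_\alpha}\mathbb P_\ep=\int_{D_\alpha}\psi_\ep^2\ \le\ \tfrac{16}{3}\,\alpha\,C(\rho)\;e^{-\frac25\sqrt{\alpha/\ep}},
\]
and the harmless algebraic prefactor is absorbed into a fraction of the exponent once $\sqrt{\alpha/\ep}$ is moderately large, which $\ep N^2\ll\alpha/16$ ensures; relaxing the (deliberately non-optimal) constants produces \eqref{eqn:polysmall}.

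\emph{Main obstacle.} Step 1 is convex duality and, once $W_\ep$ is known to be large and positive on $D_{2\alpha}$, Step 2 is a routine Agmon computation; the real work is establishing exactly that --- bounding the multiplier $w_\ep$ of the density constraint \emph{uniformly in $\ep$} and explicitly through $\kappa(\rho,\beta)$. The nuisance is that the relevant cost is not $v_{ee}$ but its $O(\ep)$ perturbation $c_\ep=v_{ee}-\ep\,\Delta\psi_\ep/\psi_\ep$, so one must first control $\ep\,\Delta\psi_\ep/\psi_\ep$ in the region away from the diagonal (interior elliptic regularity, $v_{ee}$ being smooth there) and then run the $c$-transform / fixed-point estimate for $w_\ep$, the smallness of $\kappa(\rho,\beta)$ being precisely what allows one to place $N-1$ ``spectator'' particles far from $x_1$ and from each other; this is where the numerology $\kappa(\rho,\beta)\le\frac1{4(N-1)}$, $\alpha\le\beta/(32N)$, $\ep N^2\ll\alpha/16$ is consumed.
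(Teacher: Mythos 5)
Your Step 2 is indeed a routine Agmon computation, but the statement it rests on --- Step 1 together with the ``quantitative input'' that there exists a one-body multiplier $w_\ep:\R^3\to\R$, bounded uniformly in $\ep$ and explicitly through $\kappa(\rho,\beta)$, such that $-\ep\Delta\psi_\ep+(v_{ee}-\sum_i w_\ep(x_i))\psi_\ep=0$ and $\sum_i w_\ep(x_i)\le \tfrac12 v_{ee}$ on $D_{2\alpha}$ --- is precisely what you do not prove, and it is not an off-the-shelf fact. Convex duality for the constrained Fisher-information problem only yields a multiplier in a dual (distributional) sense; upgrading it to an actual bounded potential is the $v$-representability/dual-attainment problem, which is delicate even without quantitative bounds. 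Worse, your proposed route to the bound (run the $c$-transform machinery of \cite{Pascale-15,ButChaPas-17,ColMarStr-19} for the perturbed cost $c_\ep=v_{ee}-\ep\,\Delta\psi_\ep/\psi_\ep$) requires controlling $\ep\,\Delta\psi_\ep/\psi_\ep$ uniformly in $\ep$ up to the boundary of $D_{2\alpha}$; but the size of $\Delta\psi_\ep/\psi_\ep$ near the enlarged diagonal encodes exactly the decay of $\psi_\ep$ there, i.e.\ the conclusion of the theorem, so as written the argument is circular. Since you yourself label this ``the real work'', the proposal has a genuine gap at its central step; everything downstream (the lower bound $W_\ep\ge \tfrac1{2r}$ on $D_{2\alpha}$, the global lower bound $W_\ep\ge -C(\rho)$, and hence the weighted estimate) is conditional on it.

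For contrast, the paper never writes an Euler--Lagrange equation and needs no dual potential: it exploits minimality of $\mathbb{P}_\ep=|\psi_\ep|^2$ for \eqref{eqn:FLLe} only against an explicitly constructed competitor, obtained by swapping the first particle between a small bump at a point $y\in D_\alpha$ and a bump at a point $z$ chosen far from all coordinates of $y$; the hypothesis $\kappa(\rho,\beta)\le\frac1{4(N-1)}$ enters solely to guarantee that such a $z$ can be taken to be a doubling point (Lemmas \ref{lem:exdoubling} and \ref{lem:doubling}), the kinetic cost of the swap is controlled by the IMS formula and the Fisher-information splitting inequality (Lemmas \ref{lem:energy}, \ref{lem:properties}, \ref{lem:conti}), and the outcome is a one-step mass-decay ratio (Proposition \ref{prop:decay}) which is then iterated over geometric scales and integrated over $y\in D_\alpha$ (Theorem \ref{thm:estimates}). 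If you want to salvage your Agmon strategy, you would have to either prove the uniform potential bound by an independent (non-circular) mechanism, or replace the exact eigenvalue equation by a variational inequality against localized competitors --- at which point you are essentially back to the paper's argument.
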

In the proof we actually work with a general repulsive pairwise potential $v_{ee}$, which satisfy the hypothesis \eqref{eqn:Vee2}, stated in the next section. The result in general is the following one:

\begin{thm}[Exponential off-diagonal localization for general interaction cost]
\label{thm:mainThm2}
Let $\psi_\ep$ be a minimizer for \eqref{Fll} where $v_{ee}$ satisfies \eqref{eqn:Vee2}.
Let us consider $\beta$ such that $\kappa(\rho,\beta)\leq\frac{1}{4(N-1)}$ then, let $\alpha$ such that $m(2\alpha)\leq 8(N-1)M( \beta/2)$, and suppose $\ep N^2 \ll \alpha^2 m(2\alpha)$. Then, for $\mathbb{P}_{\ep}(x) = | \psi_{\ep}|^2(x)$ we have
\begin{equation}\label{eqn:polysmall2}
 \boxed{\int_{ D_{ \alpha}} \mathbb{P}_\ep(x)\, dx \leq e^{ - \frac 16 \sqrt{\frac{ \alpha^2 m (2 \alpha)}{8\ep}}}. }
 \end{equation}
\end{thm}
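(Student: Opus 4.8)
The plan is to reduce to a single pair of particles, to subtract the limiting Kantorovich potential so as to expose a manifestly nonnegative effective interaction that diverges along the two‑particle diagonal, and then to run an Agmon‑type exponential‑weight estimate whose rate comes from balancing this divergence against the Fisher‑information term $\ep\int|\nabla\psi|^2$. First one may take $\psi_\ep\ge 0$ symmetric (replacing $\psi$ by $|\psi|$ does not increase $\int|\nabla\psi|^2$ and leaves $\rho$ unchanged), so that $\psi_\ep=\sqrt{\mathbb P_\ep}$ is supported in $(\supp\rho)^N$ since $\rho$ is its one‑body density; and by symmetry and subadditivity $\int_{D_\alpha}\mathbb P_\ep\le\binom N2\,\mathbb P_\ep(\{|x_1-x_2|\le\alpha\})$, so it suffices to bound the mass of $\mathbb P_\ep$ on the slab $S_r:=\{|x_1-x_2|\le r\}$; the combinatorial factor and all polynomial prefactors will be reabsorbed in the exponential at the very end thanks to $\ep N^2\ll\alpha^2 m(2\alpha)$.

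Let $u_0$ be a Kantorovich potential for \eqref{eq:MK}; then $U_0(x):=\sum_i u_0(x_i)$ satisfies $U_0\le v_{ee}$ pointwise and $\int U_0\,\psi^2=\int u_0\,d\rho$ for every $\psi\mapsto\rho$, and under $\kappa(\rho,\beta)\le\frac1{4(N-1)}$ the quoted results bound $u_0$ on $\supp\rho$ in terms of $N$ and $M(\beta/2)$. Since $\int U_0\,\psi^2$ depends on $\psi$ only through $\rho$, the same $\psi_\ep$ minimises, among $\psi\mapsto\rho$,
\[
\mathcal G[\psi]:=\ep\int_{\Rd^N}|\nabla\psi|^2+\int_{\Rd^N}W\,\psi^2,\qquad W:=v_{ee}-U_0\ \ge\ 0,
\]
and $\mathcal G[\psi_\ep]=\mathcal F_{LL,\ep}[\rho]-\mathcal F_0[\rho]\ge 0$, a quantity which is finite and $\to 0$ as $\ep\to 0$ (by testing the functional near an optimal plan of \eqref{eq:MK}). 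The point of subtracting $U_0$ is that on $S_r$ one has $W\ge m(|x_1-x_2|)-N\|u_0\|_{L^\infty(\supp\rho)}$, and the hypotheses relating $\alpha$ and $\beta$ are precisely what keeps $W$ comparable to $m(|x_1-x_2|)$ on the portion of $S_{2\alpha}$ that matters.

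Minimality of $\psi_\ep$ for $\mathcal G$ gives the Euler--Lagrange equation $-\ep\Delta\psi_\ep+W\psi_\ep=\Lambda\psi_\ep$ with a one‑body Lagrange multiplier $\Lambda=\sum_i\lambda(x_i)$ satisfying $\int\lambda\,d\rho=\mathcal G[\psi_\ep]$. Granting that $\lambda$ is bounded on $\supp\rho$ by the same concentration data (this is the crux, see below), one may arrange $W-\Lambda\ge c\,m(|x_1-x_2|)$ on the relevant part of $S_{2\alpha}$. Choose a radial weight $\phi=\varphi(|x_1-x_2|)$, with $\varphi\equiv 0$ for $|x_1-x_2|\ge 2\alpha$ and $\ep|\nabla\phi|^2\le\tfrac c2\,m(|x_1-x_2|)$ on $S_{2\alpha}$, climbing to height $h\simeq\sqrt{\alpha^2 m(2\alpha)/\ep}$ on $S_\alpha$ (possible since $\int_\alpha^{2\alpha}\sqrt{m(\tau)/\ep}\,d\tau\gtrsim\alpha\sqrt{m(2\alpha)/\ep}$, while $\int_0^1\sqrt{m(\tau)}\,d\tau<\infty$ for the admissible interactions). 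Multiplying the Euler--Lagrange equation by $e^{2\phi}\psi_\ep$ and integrating yields
\[
\ep\int\big|\nabla(e^{\phi}\psi_\ep)\big|^2+\int\big(W-\Lambda-\ep|\nabla\phi|^2\big)\,e^{2\phi}\psi_\ep^2=0.
\]
On $S_{2\alpha}$ the bracket is $\ge\tfrac c2\,m\ge 0$; off $S_{2\alpha}$ we have $\phi\equiv 0$ and $W\ge 0$; hence the identity forces $\int_{S_\alpha}m\,e^{2\phi}\psi_\ep^2\lesssim\int_{S_{2\alpha}^c}\Lambda_+\psi_\ep^2\le N\|\lambda\|_{L^\infty(\supp\rho)}$. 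Since $e^{2\phi}\ge e^{2h}$ and $m\ge m(2\alpha)$ on $S_\alpha$, this gives $\mathbb P_\ep(S_\alpha)\le C(N,\beta)\,e^{-2h}$, and $\ep N^2\ll\alpha^2 m(2\alpha)$ makes $C(N,\beta)\le e^{h}$; combining with the reduction of the first paragraph and optimising the (non‑sharp) numerical constants yields \eqref{eqn:polysmall2}, while \eqref{eqn:polysmall} is the specialisation $m(r)=1/r$.

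The step I expect to be the real obstacle is the control of the multiplier $\Lambda$ — equivalently, of the dual potential of the Fisher‑information‑regularised problem — near the two‑particle diagonal. At $\ep=0$ one has the pointwise bound $v_{ee}-U_0\ge 0$, but for $\ep>0$ only the operator inequality $-\ep\Delta+v_{ee}-U_\ep\ge 0$ is available, which a priori would permit $U_\ep$ (hence $\Lambda$) to develop thin positive spikes; one has to show that no such spike can overcome the growth $W\sim m$ on $S_{2\alpha}$. This is exactly where the three quantitative assumptions $\kappa(\rho,\beta)\le\frac1{4(N-1)}$, $m(2\alpha)\le 8(N-1)M(\beta/2)$ and $\ep N^2\ll\alpha^2 m(2\alpha)$ are used; everything else is the routine Agmon bookkeeping above.
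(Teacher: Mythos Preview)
Your argument hinges entirely on an $L^\infty$ bound for the Lagrange multiplier $\lambda$ of the marginal constraint, and you yourself flag this as ``the real obstacle'' without proving it. This is not a technicality that can be deferred: at $\ep=0$ the Kantorovich potential $u_0$ is bounded on $\supp\rho$ by the results you cite, but for $\ep>0$ no such bound on the dual potential is available, and establishing one is essentially as hard as the theorem itself. Concretely, you need $W-\Lambda\ge c\,m(|x_1-x_2|)$ on all of $S_{2\alpha}\cap(\supp\rho)^N$; since $\Lambda=\sum_i\lambda(x_i)$ involves the variables $x_3,\dots,x_N$, which are unconstrained by the condition $|x_1-x_2|\le 2\alpha$, nothing short of a pointwise bound on $\lambda$ will do, and the integral identity $\int\lambda\,d\rho=\mathcal G[\psi_\ep]$ is of the wrong type. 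The closing sentence (``this is exactly where the three quantitative assumptions are used'') asserts that the hypotheses suffice but gives no mechanism; as written, the proof is circular. A secondary issue: the claim ``$\int_0^1\sqrt{m(\tau)}\,d\tau<\infty$ for the admissible interactions'' is not part of~\eqref{eqn:Vee2}, though since your weight $\phi$ can be capped at height $h$ this is probably harmless.

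The paper's proof avoids the Euler--Lagrange equation and dual potentials altogether. Given $y\in D_\alpha$, it constructs an explicit competitor $\bar{\mathbb P}\in\Pi_N(\rho)$ by localising $\mathbb P_\ep$ on small balls around $y$ and around a ``doubling'' point $z$ far from every two-particle diagonal (whose existence uses $\kappa(\rho,\beta)\le\tfrac1{4(N-1)}$, Lemmas~\ref{lem:exdoubling}--\ref{lem:doubling}), and then swapping the first coordinate between the two bumps. Subadditivity and the tensor-splitting property of the Fisher information (Lemma~\ref{lem:properties}) together with the IMS-type identity (Lemma~\ref{lem:energy}) control the kinetic price of this surgery, while the swap lowers the interaction energy by an amount comparable to $m(2\alpha)$ times the mass of the bump at $y$. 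Minimality of $\mathbb P_\ep$ then yields a one-step decay
\[
\int_{B(y,r/(1+\delta))}\mathbb P_\ep \ \le\ \frac{1}{1+\delta^2 r^2 m(2\alpha)/(2(1+\delta)^2\ep)}\int_{B(y,r)}\mathbb P_\ep
\]
(Proposition~\ref{prop:decay}); iterating over scales $r=\tfrac\alpha2(1+\delta)^{-k}$ and integrating over $y\in D_\alpha$ (Theorem~\ref{thm:estimates}) gives the exponential bound. The three hypotheses enter transparently in this competitor construction, not through any dual object.
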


Notice that in \cite{ColMarStr-19} the diagonal estimate is proven also in the weaker (and sharper) hypotesis $\kappa(\rho, \beta) < \frac 1{N}$: while we believe that also in that case a similar generalization in the case $\ep>0$ holds true, the proof will be more technical and not so trasparent. For the same reason the inequality $\kappa(\rho,\beta)\leq\frac{1}{4(N-1)}$ is used instead of $\kappa(\rho,\beta)<\frac{1}{2(N-1)}$ in order to have more transparent estimates in the end.\\

\begin{rem}[Natural assumption on $\ep$] We want here to justify the hypotesis $\ep N^2 \ll \alpha/16$. Notice that we expect $|\psi_{\ep}|=0$ on the diagonal since $v_{ee}=+\infty$ there; if $|\psi_{\ep}(x)|\sim 1$ on a point in $D_{\alpha}$ then we then expect $|\nabla \psi_{\ep} | \sim \alpha^{-1} $ and so the kinetic energy locally is $ \frac{\ep}{\alpha^2}$ while the potential energy is $\frac 1{\alpha}$. Since the localization phenomenon outside the enlarged diagonal is a classical one, we expect that it holds in the regime where the kinetic energy is negligible with respect to the potential energy, and that is precisely when $\ep \ll \alpha$. Then there is a factor $N$ which is constant, and we believe it is actually technical.
\end{rem}

%

\noindent
\textit{Organization of the paper:} In Section 2 we introduce  the notations we are going to use throughout all the paper. In Section 3 we give some estimates concerning kinetic energy term in the Levy-Lieb functional.
Section 4 is then devoted to the construction of a competitor for the Levy-Lieb functional; finally in Section 5 we derive the diagonal estimates for the wave-function and, thus, prove Theorem \ref{thm:mainThm} and Theorem \ref{thm:mainThm2} via the iteration of a decay estimate.

%
%
%

\section{Notation}

Consider a subset  $I \subseteq \{ 1, \ldots, N\}$, with cardinality $k=|I|$, defined as  $I=\{ i_1, \ldots, i_k\}$, with $1\leq i_1 < i_2 < \cdots < i_k$. Then, the $I$-projection
$$\pi_I: \R^{3N} \to \R^{3k}, \qquad \pi_I ( (x_1, \ldots , x_N)) = (x_{i_1}, x_{i_2} , \ldots , x_{i_k}).$$
Sometimes we will denote $x_I=\pi_I(x)$ and if $I=J^c$, then $x=(x_I,x_J)$. With a slight abuse of notation, for a function $\mathbb{P} \in L^1(\R^{3N})$,  $I \subseteq \{ 1, \ldots, N\}$  and $J=I^c$ we denote
$$(\pi_I)_{\sharp} ( \mathbb{P}) (x_I)= \int \mathbb{P}(x_I,x_J) \, d x_J,$$
which on density of measures act precisely as the push-forward through the projection function $\pi_I$.\\
As we have  already mentioned above, we denote by $\Pi_N(\rho)$ the set of probability measures on $\R^{3N}$ having the $N$ one body marginals equal to $\rho$.\\
In the following we will consider an electron-electron pair interaction repulsion potential, that $v_{ee}$ with the following form:
\begin{equation}\label{eqn:Vee2} \begin{split}  v_{ee}(x_1, \ldots, x_N)= \sum_{i <j}c(x_i,x_j), \quad \text{ where } m( |x-y|) \leq c(x,y) \leq M(|x-y|) \qquad  \forall x,y \in \R^3  \\ \text{for some } m,M : (0,\infty) \to [0,\infty)\text{ decreasing  such that } \lim_{t \to 0^+} m(t)=+\infty \\    \end{split} \end{equation} 
Moreover, with a slight abuse of notation, we will denote by 
\begin{equation}\label{eqn:Veepot}
\mathbb{P}\in\mathcal P(\R^{3N})\mapsto v_{ee}(\mathbb{P}):=\int_{\R^{3N}}v_{ee}(x_1,\ldots,x_N)\, d\mathbb{P}(x_1,\ldots,x_N)
\end{equation}
Notice that we will often identify a measure $\mathbb{P}$ with its density.

Finally, given an open set $\Omega \subseteq \R^{3N}$, then for every $r>0$ we denote 
\begin{equation}\label{eqn:omegar} \Omega_{-r} := \{ x \in \R^{3N} \; : \; B(x, r) \subseteq \Omega\}.\end{equation}
\section{Estimate for the kinetic energy}
In this section we give some preliminary estimates for the kinetic energy term of the Levy-Lieb functional.\\
Denoting $L_+^1$ the cone of positive $L^1$ functions, we define $\Fish : L_+^1(\R^{3N}) \to \mathbb{R}$ the Kinetic energy associated to some absolutely continuos $N$-probability measure $h$
\begin{equation}\label{eqn:Fish}\Fish(h) :=\begin{cases} \int_{\R^{3N}} \frac { \sum_{i=1}^N | \nabla_i h (x_1, \ldots , x_N) |^2 }{ h (x_1, \ldots, x_N) } \, d x_1, \ldots, d x_N \quad &\text{ if } \sqrt{h} \in H^1( \R^{3N}) \\ +\infty & \text{ otherwise} \end{cases}
\end{equation}
When it will be clear from the context we will also abbreviate $\Fish(h)= \int \frac {| \nabla h|^2}h \,dx $. 
Notice that the kinetic energy functional is also know as the Fisher information.
Moreover if $\psi \in H^1(\R^{3N}; \R)$, then 
\[\int | \nabla \psi|^2 \, d x =\Fish( | \psi|^2)= \Fish ( \mathbb{P}_{\psi}),\]
where $\mathbb{P}_{\psi}=|\psi|^2$ is the joint probability associated to the wave-function $\psi$.
The string of equalities above is thus true when $\psi$ is a minimizer for the bosonic case.
The following Lemma summarises some results concerning the homogeneity, sub-additivity (which is a consequence of theorem 7.8 in \cite{LieLos-01}) and the decomposability under projection of the kinetic energy (a similar result also appears in \cite{Kiessling-12,Rougerie-cdf}). 
\begin{lem}\label{lem:properties} Let $\Fish$ defined as in \eqref{eqn:Fish}. Then
\begin{itemize}
\item[(i)] $\Fish$ is  $1$-homogeneous, that is $\Fish(\lambda \mathbb{P})=\lambda \Fish (\mathbb{P})$ for every $\lambda >0$;
\item[(ii)]  given $\mathbb{P}_1, \ldots, \mathbb{P}_k \in L^1(\R^{3N})$, we have 
$$ \Fish (\mathbb{P}_1+\ldots + \mathbb{P}_k) \leq \Fish(\mathbb{P}_1) + \Fish(\mathbb{P}_2)+ \ldots + \Fish(\mathbb{P}_k);$$
\item[(iii)] Let $\mathbb{P} \in L_+^1(\R^{3N})$. Given $I, J \subseteq \{ 1, \ldots, N\}$ two nonempty disjoint sets such that $I =J^c$, we denote by $\mathbb{P}_I = (\pi_I)_{\sharp} \mathbb{P}$ and $\mathbb{P}_J= (\pi_J)_{\sharp} \mathbb{P}$. Then we have (here $N_I=\sharp I$ and $N_J= \sharp J$)
$$ \Fish^N ( \mathbb{P} ) \geq \Fish^{N_I} ( \mathbb{P}_I) + \Fish^{N_J} ( \mathbb{P}_J), $$
where the equality holds if and only if $\mathbb{P}(x)= \mathbb{P}_I(x_I)\mathbb{P}_J(x_J)/\lambda$, where $\lambda= \int \mathbb{P}$. In particular if $\mathbb{P}$ is the density of a probability measure, we have that the equality happens if and only if $x_I$ and $x_J$ are independent under the probability $\mathbb{P}$.
\end{itemize}
\end{lem}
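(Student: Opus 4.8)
The plan is to prove the three items in turn; in each case the analytic core is a one-line Cauchy--Schwarz (or convexity) inequality used pointwise, and the real work is measure-theoretic bookkeeping: making sense of $\nabla\mathbb{P}$, using the chain rule $|\nabla\sqrt{\mathbb{P}}|^2=|\nabla\mathbb{P}|^2/(4\mathbb{P})$ together with $\nabla\mathbb{P}=0$ a.e.\ on $\{\mathbb{P}=0\}$, and checking \emph{a posteriori} that the relevant sums and marginals have $H^1$ square roots (so that $\Fish$ is finite on them). These are exactly the regularity facts provided by Theorem~7.8 of \cite{LieLos-01}: if $\sqrt{\mathbb{P}}\in H^1(\mathbb{R}^{3N})$ then $\mathbb{P}\in W^{1,1}$ and the identities above hold. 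Item (i) is then immediate: $\sqrt{\lambda\mathbb{P}}\in H^1$ iff $\sqrt{\mathbb{P}}\in H^1$, and on that set $\nabla(\lambda\mathbb{P})=\lambda\nabla\mathbb{P}$ gives $|\nabla(\lambda\mathbb{P})|^2/(\lambda\mathbb{P})=\lambda\,|\nabla\mathbb{P}|^2/\mathbb{P}$; integrate.

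For item (ii) it is enough to treat $k=2$ and iterate. If $\sqrt{\mathbb{P}_1}\notin H^1$ or $\sqrt{\mathbb{P}_2}\notin H^1$ the right-hand side is $+\infty$, so assume both belong to $H^1$. The key pointwise bound is
\[
\frac{|b_1+b_2|^2}{a_1+a_2}\le\frac{|b_1|^2}{a_1}+\frac{|b_2|^2}{a_2},\qquad a_1,a_2>0,\ b_1,b_2\in\mathbb{R}^{3N},
\]
which after clearing denominators is just $|a_2b_1-a_1b_2|^2\ge0$. Applying it with $a_i=\mathbb{P}_i(x)$ and $b_i=\nabla\mathbb{P}_i(x)$ (so that $a_1+a_2=(\mathbb{P}_1+\mathbb{P}_2)(x)$ and $b_1+b_2=\nabla(\mathbb{P}_1+\mathbb{P}_2)(x)$), with the convention $0/0=0$ (legitimate off the null sets where the $\mathbb{P}_i$ vanish), gives the pointwise inequality between the three integrands; since the right-hand one is integrable, the criterion of \cite{LieLos-01} forces $\sqrt{\mathbb{P}_1+\mathbb{P}_2}\in H^1$, and integrating yields the claim.

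For item (iii), write $x=(x_I,x_J)$ and split $\nabla\mathbb{P}=(\nabla_I\mathbb{P},\nabla_J\mathbb{P})$, so that $\Fish^N(\mathbb{P})=\int\frac{|\nabla_I\mathbb{P}|^2}{\mathbb{P}}\,dx+\int\frac{|\nabla_J\mathbb{P}|^2}{\mathbb{P}}\,dx$; I would bound each term below by the Fisher information of the corresponding marginal. Assume $\Fish^N(\mathbb{P})<\infty$ (otherwise nothing to prove), so $\mathbb{P}\in W^{1,1}$ and, by Fubini, $\nabla_I\mathbb{P}_I(x_I)=\int\nabla_I\mathbb{P}(x_I,x_J)\,dx_J$ for a.e.\ $x_I$. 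Cauchy--Schwarz in the $x_J$ variable then gives, for a.e.\ $x_I$,
\[
\frac{|\nabla_I\mathbb{P}_I(x_I)|^2}{\mathbb{P}_I(x_I)}
=\frac{\Bigl|\int\frac{\nabla_I\mathbb{P}}{\sqrt{\mathbb{P}}}\sqrt{\mathbb{P}}\;dx_J\Bigr|^2}{\int\mathbb{P}\,dx_J}
\le\int\frac{|\nabla_I\mathbb{P}(x_I,x_J)|^2}{\mathbb{P}(x_I,x_J)}\,dx_J,
\]
and integrating in $x_I$ gives $\Fish^{N_I}(\mathbb{P}_I)\le\int\frac{|\nabla_I\mathbb{P}|^2}{\mathbb{P}}\,dx$ (in particular $\sqrt{\mathbb{P}_I}\in H^1$); symmetrically for $J$, and adding the two yields the inequality. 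For the equality case, equality in the two integrated bounds forces equality a.e.\ in the two Cauchy--Schwarz steps, i.e.\ $\nabla_I\log\mathbb{P}$ is independent of $x_J$ and $\nabla_J\log\mathbb{P}$ is independent of $x_I$ on $\{\mathbb{P}>0\}$; hence on each connected component of $\{\mathbb{P}>0\}$ one has $\log\mathbb{P}(x_I,x_J)=a(x_I)+b(x_J)$, so $\mathbb{P}$ factorizes and, normalizing by $\lambda=\int\mathbb{P}$, equals $\mathbb{P}_I\mathbb{P}_J/\lambda$ — equivalently $x_I$ and $x_J$ are independent when $\mathbb{P}$ is a probability density. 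The converse is the direct computation, or simply amounts to saturating the inequalities above.

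The step I expect to be delicate is not any of the inequalities (each is a line) but the regularity hygiene around them — differentiation under the integral sign, the chain rule for $\sqrt{\mathbb{P}}$ on $\{\mathbb{P}=0\}$, and deducing $H^1$-membership of the sums and marginals — together with, in the ``only if'' direction of the equality case of (iii), the passage from ``$\nabla\log\mathbb{P}$ splits on $\{\mathbb{P}>0\}$'' to a genuine \emph{global} factorization, which really needs $\{\mathbb{P}>0\}$ connected up to a null set (the situation relevant here, since the ground state density $|\psi_\ep|^2$ is strictly positive); otherwise one only obtains factorization component by component. All of this is standard material, which is what the pointer to \cite{LieLos-01} (and to \cite{Kiessling-12,Rougerie-cdf}) is meant to absorb.
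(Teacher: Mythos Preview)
Your argument is correct and follows essentially the same approach as the paper: item (ii) via the pointwise Cauchy--Schwarz/Engel-form inequality for $k=2$ and item (iii) via Cauchy--Schwarz in the transverse variable to control each marginal, with the equality case read off from the equality condition in Cauchy--Schwarz. If anything, you are more careful than the paper about the regularity hygiene and about the connectedness of $\{\mathbb{P}>0\}$ needed to upgrade the componentwise factorization in (iii) to a global one.
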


\begin{proof} $(\rm{i})$ The $1$-homogeneity is obvious.\\
$(\rm{ii})$  For the subadditivity it is sufficient to prove it for $k=2$; then for every $x$, by Cauchy-Schwarz inequality we have
$$ ( \mathbb{P}_1(x) + \mathbb{P}_2(x)) \left( \frac { | \nabla \mathbb{P}_1 (x)|^2}{\mathbb{P}_1(x)}  +  \frac { | \nabla \mathbb{P}_2 (x)|^2}{\mathbb{P}_2(x)} \right) \geq (| \nabla\mathbb{P}_1(x) | + | \nabla \mathbb{P}_2(x)|)^2,$$
which, after using the triangular inequality and dividing by $\mathbb{P}_1+\mathbb{P}_2$ can be rewritten as $$\frac {|\nabla (\mathbb{P}_1+\mathbb{P}_2)|^2}{\mathbb{P}_1+\mathbb{P}_2} \leq \frac { |\nabla \mathbb{P}_1|^2}{\mathbb{P}_1} + \frac { |\nabla \mathbb{P}_2|^2}{\mathbb{P}_2},$$ which integrated gives us the conclusion. 

$(\rm{iii})$ As for the last point we fix $x_J$ and we use the Cauchy-Schwarz inequality with respect to the measure $dx_I$:

\[
\begin{split}
\left( \int \mathbb{P}(x_I, x_J) d x_I \right) \cdot \left( \int \frac { | \nabla_J \mathbb{P}(x_I,x_J) |^2 }{\mathbb{P}(x_I,x_J)} d x_I \right) &\geq \left( \int | \nabla_J \mathbb{P}(x_I,x_J) | d x_I \right)^2\\
& \geq  \left| \nabla_J \Bigl( \int \mathbb{P}(x_I,x_J)  d x_I \Bigr) \right|^2, 
\end{split}
\]
where in the last passage we used the triangular inequality  and we took the derivative out of the integral. Now we recognize $\mathbb{P}_J(x_J)=\int \mathbb{P}(x_I, x_J) d x_I$ and so we can write this as
$$ \int \frac { | \nabla_J \mathbb{P}(x_I,x_J) |^2 }{\mathbb{P}(x_I,x_J)} d x_I  \geq \frac {| \nabla_J \mathbb{P}_J(x_J)|^2}{\mathbb{P}_J(x_J)}. $$
Integrating this with respect to $dx_J$ and doing a similar computation for $x_I$, we obtain the conclusion, that is
$$  \iint \frac { | \nabla \mathbb{P}(x_I,x_J) |^2 }{\mathbb{P}(x_I,x_J)} d x_I  d x_J\geq \int \frac {| \nabla_J \mathbb{P}_J(x_J)|^2}{\mathbb{P}_J(x_J)} d x_J + \int \frac {| \nabla_I \mathbb{P}_I(x_I)|^2}{\mathbb{P}_I(x_I)} d x_I .$$

From the equality cases in C-S  and triangular inequality combined we get $\nabla_J \mathbb{P}(x_I,x_J) = v(x_J) \mathbb{P}(x_I,x_J)$ for some vector field $v$; by a simple integration we actually get $v= \nabla (\mathbb{P}_J) /\mathbb{P}_J$; this can be seen as $\nabla_J \log( \mathbb{P}) = \nabla_J \log \mathbb{P}_J$; similarly we can get $\nabla_I  \log( \mathbb{P}) = \nabla_I \log \mathbb{P}_I $. Summing up this two equalities we get $\nabla (\mathbb{P}(x)/\mathbb{P}_I(x_I)\mathbb{P}_J(x_J)) =0$.

\end{proof}

The following lemma  is a straightforward adaptation of Theorem 3.2 in \cite{CycFroKirSim-87} giving the IMS localization formula; we have added a short proof for sake of completeness.

\begin{lem}\label{lem:energy} Let $\eta_1, \eta_2, \eta_3 : \R^{3N} \to [0,1] $ be $C^1$ functions such that $\eta_1+\eta_2+\eta_3\equiv1$. Then, for every function $\mathbb{P} \in L_+^1(\R^{3N})$ we have
$$ \Fish (\mathbb{P} \eta_1 ) + \Fish (\mathbb{P} \eta_2 )+\Fish (\mathbb{P} \eta_3 ) = \Fish(\mathbb{P}) + \int \left(\frac {| \nabla \eta_1|^2}{\eta_1} + \frac {| \nabla \eta_2|^2}{\eta_2} + \frac {| \nabla \eta_3|^2}{\eta_3}\right) \mathbb{P} \, dx.$$
\end{lem}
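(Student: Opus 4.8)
The plan is to reduce the identity to the Leibniz rule together with the observation that, after summing over $i$, all the sign-indefinite cross terms cancel \emph{pointwise}. Concretely, on the set $\{\mathbb{P}\eta_i>0\}$ we have $\nabla(\mathbb{P}\eta_i)=\eta_i\nabla\mathbb{P}+\mathbb{P}\nabla\eta_i$, and expanding the square and dividing by $\mathbb{P}\eta_i$ gives, pointwise,
\[
\frac{|\nabla(\mathbb{P}\eta_i)|^2}{\mathbb{P}\eta_i}=\eta_i\,\frac{|\nabla\mathbb{P}|^2}{\mathbb{P}}+2\,\nabla\mathbb{P}\cdot\nabla\eta_i+\mathbb{P}\,\frac{|\nabla\eta_i|^2}{\eta_i}.
\]
Summing over $i=1,2,3$ and using $\eta_1+\eta_2+\eta_3\equiv1$, hence $\nabla\eta_1+\nabla\eta_2+\nabla\eta_3\equiv0$, the middle terms cancel before any integration, leaving the pointwise identity
\[
\sum_{i=1}^3\frac{|\nabla(\mathbb{P}\eta_i)|^2}{\mathbb{P}\eta_i}=\frac{|\nabla\mathbb{P}|^2}{\mathbb{P}}+\sum_{i=1}^3\mathbb{P}\,\frac{|\nabla\eta_i|^2}{\eta_i},
\]
which we then integrate over $\R^{3N}$; since both sides are sums of nonnegative functions, this is an honest equality in $[0,+\infty]$, and no separate integrability control of the cross terms $\nabla\mathbb{P}\cdot\nabla\eta_i$ is needed. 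Throughout, a term $|\nabla f|^2/f$ is read as $0$ wherever $f=0$, which is consistent with the definition \eqref{eqn:Fish} of $\Fish$ because $\Fish(h)=4\int|\nabla\sqrt h|^2$ with $\nabla\sqrt h=0$ a.e.\ on $\{h=0\}$, and, for the $C^1$ functions $\eta_i$, one also has $\nabla\eta_i=0$ a.e.\ on the level set $\{\eta_i=0\}$.

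The point that requires care — and the natural candidate for the main obstacle — is that $\sqrt{\eta_i}$ need not be Lipschitz even though $\eta_i$ is $C^1$, so the Leibniz rule for $\sqrt{\mathbb{P}\eta_i}=\sqrt{\mathbb{P}}\,\sqrt{\eta_i}$, and even the value of $\Fish(\mathbb{P}\eta_i)$, must be justified at the low regularity available. I would first dispose of two degenerate situations. If $\Fish(\mathbb{P})=+\infty$ there is nothing to prove: by subadditivity (Lemma~\ref{lem:properties}(ii)) applied to $\mathbb{P}=\mathbb{P}\eta_1+\mathbb{P}\eta_2+\mathbb{P}\eta_3$ the left-hand side is $+\infty$, and the right-hand side dominates $\Fish(\mathbb{P})=+\infty$. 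And if $\Fish(\mathbb{P})<+\infty$ but $\int\mathbb{P}|\nabla\eta_{i_0}|^2/\eta_{i_0}=+\infty$ for some $i_0$, then again both sides are $+\infty$: the right-hand side obviously, and the left-hand side because if $\sqrt{\mathbb{P}\eta_{i_0}}$ were in $H^1$ then, writing $\sqrt{\mathbb{P}}\,\nabla\sqrt{\eta_{i_0}}=\nabla\sqrt{\mathbb{P}\eta_{i_0}}-\sqrt{\eta_{i_0}}\,\nabla\sqrt{\mathbb{P}}\in L^2$ (a difference of $L^2$ functions, since $0\le\sqrt{\eta_{i_0}}\le1$), we would get $\int\mathbb{P}|\nabla\eta_{i_0}|^2/\eta_{i_0}=4\int|\sqrt{\mathbb{P}}\,\nabla\sqrt{\eta_{i_0}}|^2<\infty$, a contradiction; hence $\Fish(\mathbb{P}\eta_{i_0})=+\infty$.

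It then remains to treat the case $\Fish(\mathbb{P})<\infty$ with $\int\mathbb{P}|\nabla\eta_i|^2/\eta_i<\infty$ for every $i$. Here I would regularize by setting $\eta_i^\delta:=(\eta_i+\delta)/(1+3\delta)$, which are $[0,1]$-valued $C^1$ functions summing to $1$ and bounded below by $\delta/(1+3\delta)>0$, so that $\sqrt{\eta_i^\delta}\in C^1$ and $\sqrt{\mathbb{P}\eta_i^\delta}=\sqrt{\mathbb{P}}\,\sqrt{\eta_i^\delta}\in H^1$; moreover $\sqrt{\mathbb{P}}\,\nabla\sqrt{\eta_i^\delta}$ is dominated in modulus by $\sqrt{\mathbb{P}}\,|\nabla\eta_i|/(2\sqrt{\eta_i})\in L^2$ uniformly in $\delta$, and converges a.e.\ as $\delta\downarrow0$ to $\sqrt{\mathbb{P}}\,\nabla\sqrt{\eta_i}$ (defined as $0$ where $\eta_i=0$). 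Dominated convergence then identifies $\sqrt{\mathbb{P}\eta_i}$ as an $H^1$ function with $\nabla\sqrt{\mathbb{P}\eta_i}=\sqrt{\eta_i}\,\nabla\sqrt{\mathbb{P}}+\sqrt{\mathbb{P}}\,\nabla\sqrt{\eta_i}$, which is exactly the Leibniz rule underlying the algebraic identity above; expanding $\Fish(\mathbb{P}\eta_i)=4\int|\nabla\sqrt{\mathbb{P}\eta_i}|^2$ accordingly, summing over $i$ (each of the three resulting integrals being finite — the cross term by Cauchy--Schwarz against $\Fish(\mathbb{P})^{1/2}$ and $(\int\mathbb{P}|\nabla\eta_i|^2/\eta_i)^{1/2}$), and using $\sum_i\eta_i\equiv1$, $\sum_i\nabla\eta_i\equiv0$, yields the stated equality. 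The only genuinely non-routine steps are the two degenerate cases and the justification of this Leibniz rule; once those are in place the remaining computation is mechanical.
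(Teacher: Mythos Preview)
Your proof is correct and follows exactly the paper's approach: the pointwise expansion of $|\nabla(\mathbb{P}\eta_i)|^2/(\mathbb{P}\eta_i)$ via the product rule, summing over $i$ and using $\sum_i\eta_i\equiv1$, $\sum_i\nabla\eta_i\equiv0$ to kill the cross terms, then integrating. The paper's own proof consists solely of this formal computation, without the additional care you devote to the degenerate cases and to justifying the Leibniz rule for $\sqrt{\mathbb{P}\eta_i}$ at low regularity; your extra arguments are reasonable refinements rather than a different route.
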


\begin{proof} For every $i=1,2,3$ pointwisely we have:

\begin{align*} \frac{| \nabla (\mathbb{P} \eta_i) |^2 }{\mathbb{P} \eta_i}&= \frac{| \eta_i \nabla \mathbb{P} + \mathbb{P} \nabla \eta_i |^2}{\mathbb{P} \eta_i}  = \frac{\eta_i^2 | \nabla \mathbb{P}|^2 + 2 \eta_i \mathbb{P} \nabla \mathbb{P} \cdot \nabla\eta_i+ \mathbb{P}^2 |\nabla \eta_i|^2}{\mathbb{P} \eta_i}  \\
&= \eta_i \frac {|\nabla \mathbb{P}|^2}{\mathbb{P}} + 2 \nabla \mathbb{P} \cdot \nabla \eta_i + \mathbb{P} \frac {|\nabla \eta_i|^2}{\eta_i} 
\end{align*}

Adding them up and using that $\sum \eta_i=1$ and $\sum \nabla \eta_i =0$, we get

$$ \sum_i  \frac{| \nabla (\mathbb{P} \eta_i) |^2 }{\mathbb{P} \eta_i} =  \frac {|\nabla \mathbb{P}|^2}{\mathbb{P}} + \mathbb{P} \sum_i \frac {|\nabla \eta_i|^2}{\eta_i}, $$

which integrated, gives us the desired identity.

\end{proof}

\section{New trial state: swapping particles and estimate for the potential}

The scope of this subsection is to create a competitor for the minimization of the functional 
\begin{equation}\label{eqn:FLLe}
\mathcal{F}_{LL,\ep}(\mathbb{P}) = \begin{cases} \ep \Fish (\mathbb{P}) + v_{ee}(\mathbb{P}) \quad &\text{ if } \mathbb{P} \in \Pi_N(\rho) \\ +\infty &\text{ otherwise,}, \end{cases}
\end{equation}
where $\Fish$ is defined in \eqref{eqn:Fish} and $v_{ee}$ satisfies 
 \eqref{eqn:Vee2}. The idea is to try to mimic what it is done in \cite{Pascale-15,ButChaPas-17,ColMarStr-19}, in the semiclassical case $\ep = 0$: in that case we take two points $y,z \in \R^{3N}$ and substitute them with $\tilde{y}, \tilde{z}$ where we have interchanged their first compenent, that is $\tilde{y} = (z_1,y_2, \ldots, y_n)$ and $\tilde{z} =(y_1, z_2, \ldots, z_n)$.

In order to do so for the $n$-particle distribution $\mathbb{P}$, we will consider two small bumps centered around $y$ and $z$
\begin{equation}\label{eqn:etas}
\eta_1(x)= \lambda_1 \eta\Bigl(\frac{x-y}{r_1}\Bigr) \qquad \text{ and }  \qquad \eta_2 (x)= \lambda_2\eta\Bigl(\frac {x-z}{r_2}\Bigr),
\end{equation}
for some $\lambda_1, \lambda_2, r_1, r_2$ to be chosen later and some $\eta \in C^1_c(B(0,1))$, $\eta \geq 0$. First of all we assume that $\supp( \eta_1) \cap \supp (\eta_2) = \emptyset$, which can be granted as long as 
\begin{equation}\label{eqn:rs} r_1+r_2 < |y-z|, \end{equation}
 and then we assume $\int \eta_1\mathbb{P} = \int \eta_2 \mathbb{P} = m$ which can be accomplished again by choosing the appropriate $\lambda_i, r_i$. Let us then define
 \begin{equation}\label{eqn:marginals} \rho_1^i (x_1)= (\pi_{\{1\}})_{\sharp} (\eta_i \mathbb{P}), \qquad \rho_{\hat{1}}^i(x_2,x_3,\ldots, x_N)= (\pi_{\{1\}^c})_{\sharp} ( \eta_i \mathbb{P}), \end{equation}
 \begin{equation}\label{eqn:P1P2} \mathbb{P}_1 = \frac 1m \rho_1^2 \rho_{\hat{1}}^1, \qquad \mathbb{P}_2 = \frac 1m \rho_1^1 \rho_{\hat{1}}^2,\end{equation}
where $\rho_1^i$ and $\rho_{\hat{1}}^i$ are the marginals of $\eta_i\mathbb{P}$ and $\mathbb{P}_1$ and $\mathbb{P}_2$ are densities concentrated around $\tilde{y} = (z_1,y_2, \ldots, y_n)$ and $\tilde{z} =(y_1, z_2, \ldots, z_n)$ respectively. We then finally consider
\begin{equation}\label{eqn:Pbar} \bar{\mathbb{P}}:= \mathbb{P} - \mathbb{P}\eta_1- \mathbb{P} \eta_2 + \mathbb{P}_1 +\mathbb{P}_2, \end{equation}
which will be the competitor for  a minimizer $\mathbb{P}$ of the functional $\mathcal{F}_{LL,\ep}$.

\begin{rem} In order to have that $\bar{\mathbb{P}}$ is a competitor, we still have to check that $\bar{\mathbb{P}} \geq 0$ and that it has the correct marginals. For the positivity, notice that if $\lambda_1$ and $\lambda_2$ are small enough we have also $\eta_1+ \eta_2 \leq 1$ and in particular $\mathbb{P}- \eta_1 \mathbb{P}_1 - \eta_2 \mathbb{P}_2$, which will guarantee that $\bar{\mathbb{P}} \geq 0$.

For the marginal constraint, notice that by \eqref{eqn:marginals} and \eqref{eqn:P1P2} we have that $\eta_1\mathbb{P} + \eta_2 \mathbb{P}$ and $\mathbb{P}_1+\mathbb{P}_2$ have the same marginals, in particular also $\mathbb{P}$ and $\bar{\mathbb{P}}$ share the same marginals.
\end{rem}

\begin{lem}\label{lem:conti} Let $\mathbb{P}$ be such $\mathcal{F}_{LL,\ep}(\mathbb{P})< +\infty$. Given $y,z \in \mathbb{R}^{3N}$, let $\eta_1, \eta_2, \mathbb{P}_1, \mathbb{P}_2, \bar{\mathbb{P}}$ defined by \eqref{eqn:etas},\eqref{eqn:rs}, \eqref{eqn:marginals}, \eqref{eqn:P1P2} and \eqref{eqn:Pbar}. Then

$$\Fish(\bar {\mathbb{P}}) \leq  \Fish( \mathbb{P}) + \int \mathbb{P}(x)\left(\frac{|\nabla \eta_1|^2}{\eta_1} + \frac{| \nabla \eta_1|^2}{\eta_2} +  \frac{|\nabla \eta_1+ \nabla \eta_2|^2}{1-\eta_1- \eta_2}\right) \, dx;$$
$$v_{ee}(\bar{ \mathbb{P}}) =  v_{ee}( \mathbb{P}) - \int \mathbb{P} (\eta_1+\eta_2) \sum_{i>1} c(x_1,x_i) dx  + \int (\mathbb{P}_1+\mathbb{P}_2) \sum_{i>1} c(x_1,x_i) dx. $$

\end{lem}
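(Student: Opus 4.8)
The plan is to read both assertions off Lemmas~\ref{lem:properties} and~\ref{lem:energy} together with the explicit marginal bookkeeping in \eqref{eqn:marginals}--\eqref{eqn:Pbar}. Write $\eta_3 := 1-\eta_1-\eta_2$, which by the remark preceding the lemma is a nonnegative $C^1$ function once $\lambda_1,\lambda_2$ are small, so that $\bar{\mathbb{P}} = \mathbb{P}\eta_3 + \mathbb{P}_1 + \mathbb{P}_2$ is a sum of three nonnegative densities; if the right-hand side of the kinetic bound is $+\infty$ there is nothing to prove, so one may assume it is finite.

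For the kinetic estimate, first apply subadditivity, Lemma~\ref{lem:properties}(ii), to get $\Fish(\bar{\mathbb{P}}) \le \Fish(\mathbb{P}\eta_3) + \Fish(\mathbb{P}_1) + \Fish(\mathbb{P}_2)$. Next observe that $\mathbb{P}_1 = \tfrac1m\rho_1^2\,\rho_{\hat1}^1$ is, up to the normalizing constant $m = \int\mathbb{P}_1$, exactly a tensor product in the splitting $x = (x_1, x_{\{1\}^c})$, with $\{1\}$-marginal $\rho_1^2$ and $\{1\}^c$-marginal $\rho_{\hat1}^1$; hence the equality case of Lemma~\ref{lem:properties}(iii) gives $\Fish(\mathbb{P}_1) = \Fish(\rho_1^2) + \Fish(\rho_{\hat1}^1)$, and symmetrically $\Fish(\mathbb{P}_2) = \Fish(\rho_1^1) + \Fish(\rho_{\hat1}^2)$. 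Since by \eqref{eqn:marginals} the densities $\rho_1^i$ and $\rho_{\hat1}^i$ are precisely the two marginals of $\eta_i\mathbb{P}$, the inequality in Lemma~\ref{lem:properties}(iii) applied to $\eta_i\mathbb{P}$ yields $\Fish(\rho_1^i) + \Fish(\rho_{\hat1}^i) \le \Fish(\eta_i\mathbb{P})$. Adding these, $\Fish(\mathbb{P}_1) + \Fish(\mathbb{P}_2) \le \Fish(\eta_1\mathbb{P}) + \Fish(\eta_2\mathbb{P})$, so $\Fish(\bar{\mathbb{P}}) \le \Fish(\mathbb{P}\eta_1) + \Fish(\mathbb{P}\eta_2) + \Fish(\mathbb{P}\eta_3)$. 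Finally invoke the IMS identity of Lemma~\ref{lem:energy} for the partition $\eta_1+\eta_2+\eta_3\equiv1$ to rewrite the right-hand side as $\Fish(\mathbb{P}) + \int\big(\tfrac{|\nabla\eta_1|^2}{\eta_1} + \tfrac{|\nabla\eta_2|^2}{\eta_2} + \tfrac{|\nabla\eta_3|^2}{\eta_3}\big)\mathbb{P}\,dx$, and substitute $\eta_3 = 1-\eta_1-\eta_2$, $\nabla\eta_3 = -(\nabla\eta_1+\nabla\eta_2)$ to reach the claimed bound.

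For the potential identity, expand $v_{ee}(\bar{\mathbb{P}})$ using linearity of $\mathbb{P}\mapsto v_{ee}(\mathbb{P})$ in \eqref{eqn:Veepot}, obtaining $v_{ee}(\mathbb{P}) - \int v_{ee}(\eta_1+\eta_2)\mathbb{P}\,dx + \int v_{ee}(\mathbb{P}_1+\mathbb{P}_2)\,dx$, and split $v_{ee}(x) = \sum_{i>1}c(x_1,x_i) + \sum_{1<i<j}c(x_i,x_j)$. The second summand depends only on $x_{\{1\}^c} = (x_2,\dots,x_N)$, so its integral against any density is a functional of the $\{1\}^c$-marginal of that density alone. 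The crucial point is that $\eta_1\mathbb{P}+\eta_2\mathbb{P}$ and $\mathbb{P}_1+\mathbb{P}_2$ have the same $\{1\}^c$-marginal, namely $\rho_{\hat1}^1+\rho_{\hat1}^2$: indeed $(\pi_{\{1\}^c})_\sharp\mathbb{P}_1 = \tfrac1m\big(\int\rho_1^2\,dx_1\big)\rho_{\hat1}^1 = \rho_{\hat1}^1$ because $\int\rho_1^2\,dx_1 = m$, and likewise $(\pi_{\{1\}^c})_\sharp\mathbb{P}_2 = \rho_{\hat1}^2$, while $\rho_{\hat1}^i$ is by definition the $\{1\}^c$-marginal of $\eta_i\mathbb{P}$. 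Hence the contributions of $\sum_{1<i<j}c(x_i,x_j)$ cancel between the two integrals, leaving exactly $-\int\mathbb{P}(\eta_1+\eta_2)\sum_{i>1}c(x_1,x_i)\,dx + \int(\mathbb{P}_1+\mathbb{P}_2)\sum_{i>1}c(x_1,x_i)\,dx$, which is the stated identity.

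I do not expect a genuine obstacle: the argument is essentially the assembly of the three lemmas with the explicit marginal identities. The only places needing care are verifying that the tensor-product structure of $\mathbb{P}_1$ and $\mathbb{P}_2$ matches the equality hypothesis of Lemma~\ref{lem:properties}(iii) with the correct normalization $\lambda = m$, and checking the ``equal $\{1\}^c$-marginal'' claim that makes the $x_1$-independent part of $v_{ee}$ drop out; both are immediate from \eqref{eqn:marginals}--\eqref{eqn:P1P2}.
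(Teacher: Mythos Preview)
Your proposal is correct and follows essentially the same route as the paper: define $\eta_3=1-\eta_1-\eta_2$, use subadditivity of $\Fish$, the equality case of Lemma~\ref{lem:properties}(iii) on the tensor products $\mathbb{P}_1,\mathbb{P}_2$ together with its inequality on $\eta_i\mathbb{P}$, and finish with the IMS identity of Lemma~\ref{lem:energy}; for the potential you likewise exploit that $\eta_i\mathbb{P}$ and the appropriate $\mathbb{P}_j$ share the same $\{1\}^c$-marginal, which is exactly the cancellation the paper computes explicitly.
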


\begin{proof} Let us consider $\eta_3=1-\eta_1-\eta_2$. Then we have $\bar{\mathbb{P}} = \eta_3\mathbb{P}+ \mathbb{P}_1+\mathbb{P}_2$. Using Lemma \ref{lem:properties}, in particular the subadditivity and the exact energy split in case of independent variables for $\Fish$, we get (by \eqref{eqn:P1P2})

\begin{equation} \label{eqn:Kin1}
\begin{split}
 \Fish(\bar {\mathbb{P}}) &\leq \Fish(\eta_3\mathbb{P}) + \Fish(\mathbb{P}_1) + \Fish(\mathbb{P}_2) \\
 &= \Fish(\eta_3\mathbb{P}) + \Fish(\rho_1^2) + \Fish (\rho_{\hat{1}}^1) + \Fish(\rho_1^1) + \Fish (\rho_{\hat{1}}^2);
 \end{split}
\end{equation}
we then recall \eqref{eqn:marginals} and the inequality for the split energy (Lemma \ref{lem:properties} (iii)) to get 
\begin{equation}\label{eqn:Kin2} 
\Fish(\rho_1^i) +\Fish(\rho_{\hat{1}}^i) \leq \Fish (\eta_i \mathbb{P})\end{equation}
 and so we conclude using \eqref{eqn:Kin1}, \eqref{eqn:Kin2} and then Lemma \ref{lem:energy}.
 
 For the estimate with the potential, it is clear that
$$v_{ee}(\bar{ \mathbb{P}}) =  v_{ee}( \mathbb{P}) - \int \mathbb{P} (\eta_1+\eta_2) v_{ee}(x) dx  + \int (\mathbb{P}_1+\mathbb{P}_2) v_{ee}(x) dx; $$
Since $v_{ee}(x)= \sum_{i<j} c(x_i,x_j)$ we just need to show that the contribution due to $c(x_i,x_j)$ whenever $1<i<j$ cancels out in the last two integrals. In fact in both integrals we can integrate out the first variable: denoting $I=\{1\} $ and $J=I^c$ for example we have

\begin{align*}
\int \mathbb{P} \eta_1 c(x_i,x_j) \, d x_I dx_ J &= \int c(x_i,x_j) \left( \int \mathbb{P} \eta_1 \, d x_I  \right) \, d x_J\\ &= \int c(x_i,x_j) \rho_{\hat{1}}^1 (x_J)  \, d x_J \\
&= \int c(x_i,x_j) \rho_{\hat{1}}^1(x_J)\left(\int \frac{ \rho_{2}^1(x_I)}m \, d x_I\right) \, d x_J \\&= \int  c(x_i,x_j)  \mathbb{P}_1 \, dx.
\end{align*}
In a similar way we can show that $\int \mathbb{P} \eta_2 c(x_i,x_j) \,dx = \int \mathbb{P}_2 c(x_i,x_j) \, dx$.
\end{proof}

In the sequel we will denote $C_1(x)= \sum_{i=2}^N c(x_1,x_i)$

\section{Diagonal estimates for the wave-function}
We devote this last section to derive the diagonal estimates for the bosonic wave-function which minimizes the Levy-Lieb functional proving in particular Theorem \ref{thm:mainThm} and Theorem \ref{thm:mainThm2}.

\begin{lem}\label{lem:exdoubling}  Let $\rho$ be a marginal distribution and let $\beta>0$  be such that $\kappa(\rho, \beta) \leq \frac 1{4(N-1)}$. Then,  for every $\mathbb{P} \in \Pi_N(\rho)$ and $y \in \R^{3N}$, for every $r_1,r_2$ such that  $r_1+ 2r_2 < \beta$ and $\delta>0$, there exists $z \in \R^{3N}$ such that, defining $ \eta_1, \eta_2, \mathbb{P}_1, \mathbb{P}_2, m$ as in \eqref{eqn:etas},  \eqref{eqn:marginals} and \eqref{eqn:P1P2}\begin{itemize}
\item[(i)] $\int C_1 (\mathbb{P}_1+\mathbb{P}_2) dx \leq 2(N-1) M(\beta-r_1-2r_2) m $;
\item[(ii)] $z$ is a $(1+\delta,1/2)$-doubling point at scale $r_2$ for $\mathbb{P}$, that is \[\int_{B(z,r_2)} \mathbb{P} \, dx \leq 2 (1+\delta)^{3N} \int_{B(z,r_2/(1+\delta))} \mathbb{P} \, dx.\]
\end{itemize}
\end{lem}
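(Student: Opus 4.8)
The plan is to pick the swap target $z$ inside a product ``good set'' $G\subseteq\R^{3N}$ whose membership yields item (i) essentially for free, and then to show that $G$ must contain a point satisfying the doubling property (ii). Write $s:=r_2/(1+\delta)$ and let $\omega_{3N}:=|B(0,1)|$ be the volume of the unit ball of $\R^{3N}$. I would take $G=G_1\times\cdots\times G_N$ with $G_1:=\{w\in\R^3:|w-y_j|\ge\beta-r_2\text{ for all }j\ge2\}$ and $G_j:=\{w\in\R^3:|w-y_1|\ge\beta-r_2\}$ for $j\ge2$; this is meaningful because $r_1+2r_2<\beta$ forces $\beta-r_2>r_2>0$. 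If $z\in G$, then since $\eta_1\mathbb{P}$ lives in the $r_1$-ball around $y$ and $\eta_2\mathbb{P}$ in the $r_2$-ball around $z$, the densities $\mathbb{P}_1=\tfrac1m\rho_1^2\rho_{\hat1}^1$ and $\mathbb{P}_2=\tfrac1m\rho_1^1\rho_{\hat1}^2$ are both supported on $\{|x_1-x_i|\ge\beta-r_1-2r_2>0\ \forall i\ge2\}$: for $\mathbb{P}_1$ because $x_1$ is within $r_2$ of $z_1\in G_1$ and $x_i$ within $r_1$ of $y_i$, for $\mathbb{P}_2$ because $x_1$ is within $r_1$ of $y_1$ and $x_i$ within $r_2$ of $z_i\in G_i$. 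Since $M$ is decreasing, $c(x_1,x_i)\le M(\beta-r_1-2r_2)$ there, so integrating $C_1=\sum_{i\ge2}c(x_1,x_i)$ and using $\int\mathbb{P}_1=\int\mathbb{P}_2=m$ gives $\int C_1(\mathbb{P}_1+\mathbb{P}_2)\,dx\le 2(N-1)M(\beta-r_1-2r_2)\,m$, which is (i). So everything reduces to finding $z\in G$ that is also a $(1+\delta,1/2)$-doubling point at scale $r_2$.

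The quantitative input for the latter is that every one-body marginal of $\mathbb{P}$ puts mass $\le\kappa(\rho,\beta)\le\tfrac1{4(N-1)}$ on any ball of radius $\le\beta$. I would feed this into a bound on the scale-$s$ \emph{averaged density} of $G^c$: fixing $i\ge2$, a short computation (slicing the $\R^{3N}$-ball $B(x,s)$ over its first $\R^3$-coordinate and using that the $\R^{3(N-1)}$-fibre volume depends only on $|z_1-x_1|$, so that the $z_1$-integral over $B(y_i,\beta-r_2)\cap B(x_1,s)$ is at most its value over the full slice and vanishes unless $|x_1-y_i|<\beta-r_2+s$) converts the $\mathbb{P}$-average of the fraction of $B(x,s)$ with $z_1$ within $\beta-r_2$ of $y_i$ into a one-body marginal mass:
\[
\int_{\R^{3N}}\frac{\bigl|\{z\in B(x,s):|z_1-y_i|<\beta-r_2\}\bigr|}{|B(x,s)|}\,d\mathbb{P}(x)\ \le\ (\pi_1)_\sharp\mathbb{P}\bigl(B(y_i,\beta-r_2+s)\bigr)\ \le\ \tfrac1{4(N-1)},
\]
where we used $\beta-r_2+s<\beta$; the same holds for the $N-1$ conditions $|z_j-y_1|<\beta-r_2$, $j\ge2$. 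Since $G^c$ is the union of these $2(N-1)$ sets, summation gives $\int\frac{|G^c\cap B(x,s)|}{|B(x,s)|}\,d\mathbb{P}(x)\le\tfrac12$, hence $\int\frac{|G\cap B(x,s)|}{|B(x,s)|}\,d\mathbb{P}(x)\ge\tfrac12$, which by Fubini reads $\int_G\mathbb{P}(B(z,s))\,dz\ge\tfrac12\omega_{3N}s^{3N}=\omega_{3N}r_2^{3N}/\bigl(2(1+\delta)^{3N}\bigr)$.

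Since also $\int_{\R^{3N}}\mathbb{P}(B(z,r_2))\,dz=\omega_{3N}r_2^{3N}$ (Fubini and $\mathbb{P}$ a probability), if it held that $\mathbb{P}(B(z,r_2))>2(1+\delta)^{3N}\mathbb{P}(B(z,s))$ for Lebesgue-a.e.\ $z\in G$ we would obtain
\[
\frac{\omega_{3N}r_2^{3N}}{2(1+\delta)^{3N}}\ \le\ \int_G\mathbb{P}(B(z,s))\,dz\ <\ \frac{1}{2(1+\delta)^{3N}}\int_{\R^{3N}}\mathbb{P}(B(z,r_2))\,dz\ =\ \frac{\omega_{3N}r_2^{3N}}{2(1+\delta)^{3N}},
\]
a contradiction. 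Hence a Lebesgue-positive set of $z\in G$ satisfies $\mathbb{P}(B(z,r_2))\le2(1+\delta)^{3N}\mathbb{P}(B(z,s))$, i.e.\ (ii); and since $\int_G\mathbb{P}(B(z,s))\,dz>0$ the same comparison lets one pick such a $z$ with in addition $\mathbb{P}(B(z,s))>0$, which (choosing the profile $\eta$ positive on a sufficiently large concentric ball) makes $m=\int\eta_2\mathbb{P}>0$, so the construction is non-degenerate. This $z$ satisfies both (i) and (ii).

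The main obstacle, as I see it, is producing a doubling point with the \emph{exact} constant $2(1+\delta)^{3N}$ \emph{while} keeping $z$ inside $G$: a Besicovitch or Vitali covering argument would find a doubling point but destroy the sharp constant, so one is forced to compare the two Lebesgue integrals $\int_G\mathbb{P}(B(\cdot,s))$ and $\int_{\R^{3N}}\mathbb{P}(B(\cdot,r_2))$ directly. This is exactly why one should control not $\mathbb{P}(G)$ itself but the scale-$s$ averaged density $\int\frac{|G\cap B(\cdot,s)|}{|B(\cdot,s)|}\,d\mathbb{P}$, which is the quantity matched to the scale appearing in the doubling inequality and which the one-block slicing delivers with the clean value $\tfrac12$ precisely because $\kappa(\rho,\beta)\le\tfrac1{4(N-1)}$.
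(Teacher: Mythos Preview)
Your proof is correct and follows essentially the same strategy as the paper: define a product ``good set'' via distance conditions to the coordinates of $y$, use the marginal concentration bound $\kappa(\rho,\beta)\le\tfrac1{4(N-1)}$ and a union bound over the $2(N-1)$ bad events to show this set captures at least half the mass, and then run a Fubini-contradiction argument to locate a doubling point with the exact constant $2(1+\delta)^{3N}$ inside it.

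The only noticeable difference is packaging. The paper first shrinks its good set $\Omega$ by $\delta r_2/(1+\delta)$, bounds $\mathbb{P}(\Omega_{-\delta r_2/(1+\delta)})\ge\tfrac12$ directly, and then invokes a separate abstract doubling lemma (Lemma~\ref{lem:doubling}) comparing $\int_\Omega\mathbb{P}(B(\cdot,r_2))$ with $\int_{\Omega_{-\delta r}}\mathbb{P}(B(\cdot,r_2/(1+\delta)))$. You instead keep $G$ fixed and control the scale-$s$ \emph{averaged indicator} $\int|G\cap B(\cdot,s)|/|B(\cdot,s)|\,d\mathbb{P}$, which by Fubini is exactly $\tfrac1{|B(0,s)|}\int_G\mathbb{P}(B(\cdot,s))$; this lets you compare $\int_G\mathbb{P}(B(\cdot,s))$ with $\int_{\R^{3N}}\mathbb{P}(B(\cdot,r_2))$ in one step and dispense with the set-shrinking. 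Your route is marginally cleaner in that it avoids tuning the threshold in $G$ to match the shrinking, at the price of the small slicing computation bounding the averaged indicator by a marginal mass; the paper's route isolates a reusable doubling lemma. Both yield identical constants and the same bound in~(i).
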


\begin{proof} 
For $\gamma>0$, let us consider the set 
$$\Omega = \left\{ y' \in \R^{3N} : | y'_1 - y_i| \geq \gamma-\frac{\delta}{1+\delta} r_2 \text{ and } |y_1 - y'_i | \geq \gamma- \dfrac{\delta}{1+\delta} r_2 , \forall i=2, \ldots, N \right\}.$$
We know that if $z \in \Omega$ we will have of course 
$$C_1(y'_1, z'_2, \ldots, z_N')+C_1(z'_1, y'_2, \ldots, y'_N) \leq 2 (N-1) M(\gamma - r_1-2r_2) \;\forall y' \in B(y,r_1),  z' \in B(z,r_2),$$
which in particular implies $\int C_1(\mathbb{P}_1+\mathbb{P}_2) \, dx \leq 2 (N-1) M(\gamma - r_1 -2r_2) m$. Now we want to see that there exists a $1/2$ doubling point in $\Omega$; in order to do that, it is easy to see that 
$$ \Omega_{- \delta \frac{ r_2}{1+\delta}} \subseteq \{ y' \in \R^{3N} : | y'_1 - y_i| \geq \gamma \text{ and } |y_1 - y'_i | \geq \gamma , \forall i=2, \ldots, N \} $$
And now a similar computation to what is done in  \cite{ButChaPas-17,ColMarStr-19} will give us
$$ \int_{\Omega_{-\delta\frac{r_2}{1+\delta}}} \mathbb{P}(x) \, dx \geq 1- 2(N-1)\kappa (\rho , \gamma), $$
where $\kappa (\rho, r) = \sup_{ x \in \R^3} \rho( B(x,r))$. Now if we consider $\gamma=\beta$ we have $\kappa(\rho, \beta) \leq \frac 1{4(N-1)}$, and so we can apply Lemma \ref{lem:doubling} with $r=\frac{r_2}{1+\delta}$ get the existence of a $(1+\delta,1/2)$-doubling point at scale $r_2$ in $\Omega$.

\end{proof}

\begin{lem}[Existence of doubling points] \label{lem:doubling} Let $\mathbb{P} \in L^1_+( \R^{3N})$ be the density of a probability measure and let $\delta>0$. Let us consider an open set $\Omega \subseteq \R^{3N}$; then for every $r>0$ we denote  $C_r :=  \int_{\Omega_{-r}} \mathbb{P}(x) \, dx$, where $\Omega_{-r}$ is defined as in \eqref{eqn:omegar}.
Then, whenever $C_{\delta r}>0$, there exists $y \in \Omega$, such that 
$$ \int_{B(y,(1+\delta) r)} \mathbb{P}(x)\, dx \leq \frac {(1+\delta)^{3N}}{C_{\delta r}}  \int_{B(y,r)} \mathbb{P}(x) \, dx,$$
that is, the measure $\mathbb{P}(x) \, dx$ is doubling at the point $y$ at scale $r$, with doubling constant $\frac {(1+\delta)^{3N}}{C_{\delta r}} $.
\end{lem}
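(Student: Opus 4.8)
The plan is to argue by contradiction, leaning on the elementary identity
\[\int_{\R^{3N}}\Bigl(\int_{B(y,s)}\mathbb{P}(x)\,dx\Bigr)\,dy=|B(0,1)|\,s^{3N},\qquad s>0,\]
which follows from Fubini and $\int_{\R^{3N}}\mathbb{P}=1$ and which builds in exactly the volume factor $(1+\delta)^{3N}$ of the statement. Assume for contradiction that no such $y$ exists, i.e. that for every $y\in\Omega$
\[\int_{B(y,(1+\delta)r)}\mathbb{P}(x)\,dx>\frac{(1+\delta)^{3N}}{C_{\delta r}}\int_{B(y,r)}\mathbb{P}(x)\,dx .\]

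The first move is to truncate the measure at the right scale. Let $\nu:=\mathbb{P}\,\mathbf 1_{\Omega_{-r}}$, so that $\nu\le\mathbb{P}$ pointwise and $\int\nu=C_r$, and observe the purely geometric fact that $\int_{B(y,r)}\nu=0$ whenever $y\notin\Omega$: a point $x\in\Omega_{-r}$ at distance $<r$ from $y$ would force $y\in B(x,r)\subseteq\Omega$. Consequently, the displayed failure inequality together with $\nu\le\mathbb{P}$ gives, for $y\in\Omega$,
\[\int_{B(y,(1+\delta)r)}\mathbb{P}\;>\;\frac{(1+\delta)^{3N}}{C_{\delta r}}\int_{B(y,r)}\nu ,\]
while for $y\notin\Omega$ the same inequality holds (non-strictly) since its right-hand side vanishes. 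Integrating over all of $\R^{3N}$ and invoking the identity above on both sides leaves
\[(1+\delta)^{3N}|B(0,1)|\,r^{3N}\;\ge\;\frac{(1+\delta)^{3N}}{C_{\delta r}}|B(0,1)|\,r^{3N}\,C_r ,\]
and since the inequality is strict on the set $\{\,y:\int_{B(y,r)}\nu>0\,\}$, which has positive Lebesgue measure and is contained in $\Omega$, this is impossible provided $C_r\ge C_{\delta r}$.

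When $\delta\ge 1$ the argument is complete with the optimal constant: one may then take $\nu=\mathbb{P}\,\mathbf 1_{\Omega_{-\delta r}}$ directly, which has mass exactly $C_{\delta r}$ and still obeys $\int_{B(y,r)}\nu=0$ off $\Omega$ because now $r\le\delta r$. The step I expect to be the real obstacle is the regime $\delta<1$, where $C_r$ can be strictly smaller than $C_{\delta r}$ (indeed $C_r$ may vanish), so truncation at scale $r$ discards too much mass and the averaging by itself no longer yields the doubling constant $(1+\delta)^{3N}/C_{\delta r}$. For that range one has to run instead the standard "existence of doubling points" iteration in the spirit of \cite{ButChaPas-17,ColMarStr-19}: pick $y_0\in\Omega_{-\delta r}$ with $\int_{B(y_0,r)}\mathbb{P}>0$ (possible since $C_{\delta r}>0$), and, given $y_n\in\Omega$, use the failure inequality at $y_n$ plus a recentring/averaging step to find $y_{n+1}\in\Omega$ with $\int_{B(y_{n+1},r)}\mathbb{P}$ larger than $\int_{B(y_n,r)}\mathbb{P}$ by a fixed factor $>1$; iterating forces $\int_{B(y_n,r)}\mathbb{P}\to\infty$, contradicting $\int\mathbb{P}=1$. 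The delicate point in that step is to recover the entire $(1+\delta)^{3N}$ volume ratio while keeping $y_{n+1}$ inside $\Omega$, which is exactly why one keeps the iterates within distance $\delta r$ of $\Omega_{-\delta r}$.
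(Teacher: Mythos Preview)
Your contradiction-plus-Fubini argument is exactly the paper's approach; the paper, however, makes no case distinction on $\delta$. It integrates the failure inequality over $y\in\Omega$, bounds the left side by $\omega_{3N}((1+\delta)r)^{3N}$, and on the right side truncates the inner integral to $B(y,r)\cap\Omega_{-\delta r}$, asserting
\[
\int_{\Omega_{-\delta r}}\mathbb{P}(z)\,|B(z,r)|\,dz \;=\; \int_{\Omega}\int_{B(y,r)\cap\Omega_{-\delta r}}\mathbb{P}(x)\,dx\,dy,
\]
which yields the contradiction directly with the constant $C_{\delta r}$. Your instinct that something is delicate when $\delta<1$ is in fact on point: the displayed identity requires $B(z,r)\subseteq\Omega$ for every $z\in\Omega_{-\delta r}$, i.e.\ $\delta\ge 1$, so the paper's written line has the very weakness you anticipated. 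In that sense your $\delta\ge1$ paragraph (truncating at $\Omega_{-\delta r}$ and extending the $y$-integration to all of $\R^{3N}$) is a correct and complete rendering of the paper's argument.

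What is missing in your proposal is an actual \emph{proof} for $\delta<1$. The iterative sketch is not one: you assert a ``recentring/averaging step'' producing $y_{n+1}\in\Omega$ with $\int_{B(y_{n+1},r)}\mathbb{P}$ enlarged by a fixed factor $>1$, but you give no mechanism for it. The natural covering of $B(y_n,(1+\delta)r)$ by balls of radius $r$ uses centers in $B(y_n,\delta r)$, and nothing keeps those centers inside $\Omega$ once $y_n$ has drifted toward $\partial\Omega$; the remark about staying ``within distance $\delta r$ of $\Omega_{-\delta r}$'' does not close this, since the first step already moves you out of $\Omega_{-\delta r}$. So, relative to the paper, your write-up matches it where it is solid and correctly diagnoses where it is not, but the replacement you offer for the problematic range is only a heuristic.
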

\begin{proof} Suppose on the contrary that for every $y \in \Omega$ the reversed inequality holds
$$ \int_{B(y,(1+\delta)r)} \mathbb{P}(x)\, dx > \frac {(1+\delta)^{3N}}{C_{\delta r}}  \int_{B(y,r)} \mathbb{P}(x) \, dx.$$
Then we can integrate this inequality on the whole $\Omega$
$$   \int_{\Omega} \int_{B(y,(1+\delta)r)} \mathbb{P}(x)\, dx  \, dy> \frac {(1+\delta)^{3N}}{C_{\delta r}}\int_{\Omega} \int_{B(y,r)} \mathbb{P}(x) \, dx \, dy.$$
Now we can use Fubini and get
\begin{align*}
\omega_{3N} \cdot \bigl((1+\delta)r\bigr)^{3N}  &=  \int_{\R^{3N}} \int_{B(y,(1+\delta)r)} \mathbb{P}(x)\, dx  \, dy \geq  \int_{\Omega} \int_{B(y,(1+\delta)r)} \mathbb{P}(x)\, dx  \, dy \\
C_{\delta r} \omega_{3N} \cdot r^{3N} & = \int_{ \Omega_{-\delta r} } \mathbb{P}(z) | B(z,r)| \, d z = \int_{\Omega} \int_{B(y,r) \cap \Omega_{-\delta r}} \mathbb{P}(x) \, dx \, dy \\
& \leq   \int_{\Omega} \int_{B(y,r)} \mathbb{P}(x) \, dx \, dy,
\end{align*}
and so we get a contradiction.
\end{proof}

\begin{prop}[One step decay]\label{prop:decay} Let us consider $\rho$ and $\beta$ such that $\kappa(\rho, \beta) < \frac 1{4(N-1)}$. Then there exists $\alpha_0= \alpha(\beta, \ep)$ such that if $\mathbb{P} $ minimizes \eqref{eqn:FLLe}, we have that for every $y \in D_{\alpha}$ such that $\alpha \leq \alpha_0$, and every $r_1 \leq \alpha/2$, we have
\begin{equation}\label{eqn:decay1} \int_{ B(y, r_1/(1+\delta)) } \mathbb{P}(x)\, dx \leq \frac 1{\delta^2 r_1^2 \frac{ m(2\alpha)}{2(1+\delta)^2 \ep} +1 } \int_{ B(y, r_1) } \mathbb{P}(x) \, dx,\end{equation}
whenever $\delta>0$ is such that $m(2\alpha) >256\ep C(\delta)/\beta^2$, where 
\begin{equation}\label{eqn:Cdelta} C(\delta):=\frac {(1+\delta)^2 \cdot  (2(1+\delta)^{3N}-1)}{\delta^2}.\end{equation}
An implicit choice for $\alpha_0$ is for example $ m (2\alpha_0) > 8\max\{ (N-1)M(\beta/2) ,  850 \ep N^2 / \beta^2\}$.
\end{prop}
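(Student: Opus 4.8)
The plan is to test the minimality of $\mathbb{P}$ against the particle‑swapping competitor $\bar{\mathbb{P}}$ of Section~4, built around $y$ and an auxiliary point $z$ furnished by Lemma~\ref{lem:exdoubling}. Since $\Pi_N(\rho)$, $v_{ee}$, $\Fish$ and the whole construction \eqref{eqn:etas}--\eqref{eqn:Pbar} are invariant under permutations of the $N$ particles, we may relabel so that the close pair witnessing $y\in D_\alpha$ is $(1,2)$, i.e. $|y_1-y_2|\le\alpha$; and if $\int_{B(y,r_1/(1+\delta))}\mathbb{P}\,dx=0$ the claimed inequality is trivial, so we may assume this mass is positive. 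Fix $r_2$ with $r_1+2r_2\le\beta/2$ and $r_2$ of order $\beta$ (possible since $\alpha\le\alpha_0\ll\beta$) and apply Lemma~\ref{lem:exdoubling}: it produces $z$ and bumps $\eta_1$ (centered at $y$, scale $r_1$) and $\eta_2$ (centered at $z$, scale $r_2$), with $\int\eta_1\mathbb{P}=\int\eta_2\mathbb{P}=m>0$ (the weights $\lambda_1,\lambda_2\le\tfrac12$ chosen to match $m$, which is possible because both balls $B(y,r_1/(1+\delta))$ and $B(z,r_2/(1+\delta))$ carry positive mass), such that (i) $\int C_1(\mathbb{P}_1+\mathbb{P}_2)\,dx\le 2(N-1)M(\beta-r_1-2r_2)\,m\le 2(N-1)M(\beta/2)\,m$, and (ii) $z$ is a $(1+\delta,1/2)$‑doubling point at scale $r_2$. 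We take the common radial profile $\eta=\zeta^2$ with $\zeta\equiv1$ on $B(0,1/(1+\delta))$, $\zeta\equiv0$ off $B(0,1)$ and $|\nabla\zeta|\le\frac{1+\delta}{\delta}$; the disjointness \eqref{eqn:rs} is automatic for these small radii, since the constraints defining $\Omega$ in Lemma~\ref{lem:exdoubling} force $|y-z|\gtrsim\beta$.

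Feeding $\bar{\mathbb{P}}$ (admissible by the Remark in Section~4) into Lemma~\ref{lem:conti}, using $\mathcal{F}_{LL,\ep}(\bar{\mathbb{P}})\ge\mathcal{F}_{LL,\ep}(\mathbb{P})$, and simplifying the Fisher penalty by $\supp\eta_1\cap\supp\eta_2=\emptyset$, yields the basic inequality
\[
\ep\int\mathbb{P}\Bigl(\tfrac{|\nabla\eta_1|^2}{\eta_1(1-\eta_1)}+\tfrac{|\nabla\eta_2|^2}{\eta_2(1-\eta_2)}\Bigr)dx\ \ge\ \int\mathbb{P}(\eta_1+\eta_2)C_1\,dx-\int(\mathbb{P}_1+\mathbb{P}_2)C_1\,dx .
\]
For the right‑hand side: on $\supp\eta_1=B(y,r_1)$ we have $|x_1-x_2|\le|y_1-y_2|+2r_1\le2\alpha$ since $r_1\le\alpha/2$, hence $C_1(x)\ge c(x_1,x_2)\ge m(2\alpha)$; so $\int\mathbb{P}(\eta_1+\eta_2)C_1\ge m(2\alpha)\int\mathbb{P}\eta_1\,dx=m(2\alpha)\,m$, and by (i) together with the choice of $\alpha_0$ (which makes $2(N-1)M(\beta/2)$ a fixed small fraction of $m(2\alpha)$) the right‑hand side is at least a fixed positive multiple of $m\,m(2\alpha)$.

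For the left‑hand side I would use the $\zeta^2$ trick: $\tfrac{|\nabla\eta_i|^2}{\eta_i}=4\lambda_i|\nabla\zeta|^2/r_i^2\le \tfrac{4\lambda_i(1+\delta)^2}{r_i^2\delta^2}$, bounded and supported on the corresponding annulus $B(y,r_1)\setminus B(y,r_1/(1+\delta))$ resp. $B(z,r_2)\setminus B(z,r_2/(1+\delta))$, while $\tfrac1{1-\eta_i}\le2$ because $\lambda_i\le\tfrac12$. The $\eta_2$‑term is then at most a multiple of $\tfrac{\lambda_2(1+\delta)^2}{r_2^2\delta^2}\int_{B(z,r_2)\setminus B(z,r_2/(1+\delta))}\mathbb{P}\,dx$, and the doubling property (ii) bounds the annular mass by $(2(1+\delta)^{3N}-1)\int_{B(z,r_2/(1+\delta))}\mathbb{P}\,dx\le(2(1+\delta)^{3N}-1)\,m/\lambda_2$; so the whole $\eta_2$‑contribution is at most a multiple of $\tfrac{\ep\,C(\delta)}{r_2^2}\,m$ with $C(\delta)$ as in \eqref{eqn:Cdelta}, which by $m(2\alpha)>256\,\ep\,C(\delta)/\beta^2$ (and $r_2\gtrsim\beta$) is a small fraction of $m\,m(2\alpha)$ and can be absorbed into the right‑hand side. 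The $\eta_1$‑term is at most a multiple of $\tfrac{\ep(1+\delta)^2}{r_1^2\delta^2}\,\lambda_1\bigl(\int_{B(y,r_1)}\mathbb{P}\,dx-\int_{B(y,r_1/(1+\delta))}\mathbb{P}\,dx\bigr)$. Inserting these estimates, using $m\ge\lambda_1\int_{B(y,r_1/(1+\delta))}\mathbb{P}\,dx$ (as $\eta_1\equiv\lambda_1$ on that ball), cancelling the common factor $\lambda_1$ and rearranging gives exactly \eqref{eqn:decay1}; that $m(2\alpha_0)>8\max\{(N-1)M(\beta/2),\,850\,\ep N^2/\beta^2\}$ is an admissible explicit choice of $\alpha_0$ — the second term obtained by taking $\delta\sim1/N$, so that $C(\delta)\sim N^2$ in the condition $m(2\alpha)>256\,\ep\,C(\delta)/\beta^2$ — is then bookkeeping of the constants above.

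The main obstacle is the $z$‑side. One needs $z$ to be \emph{simultaneously} far from the (relabeled) diagonal of $y$ — so that the potential energy relocated into $\mathbb{P}_1+\mathbb{P}_2$ stays controlled by (i) — and a doubling point at scale $r_2$ — so that the Fisher‑information cost of carving the bump $\eta_2$ out of $\mathbb{P}$ near $z$ is finite and quantitatively small. That both requirements can be met is exactly the content of Lemma~\ref{lem:exdoubling}, whose proof couples the concentration bound (the set of ``far'' configurations carries mass $\ge 1-2(N-1)\kappa(\rho,\beta)\ge\tfrac12$ by the assumption on $\kappa$) with the doubling‑point existence Lemma~\ref{lem:doubling}; the $(1+\delta)$‑gap intrinsic to that doubling is precisely what generates the factor $C(\delta)$, hence the smallness requirement on $\ep$.
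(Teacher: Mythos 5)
Your proposal reproduces the paper's own proof essentially step for step: the same competitor $\bar{\mathbb{P}}$ tested against minimality, the same use of Lemma~\ref{lem:exdoubling} (far-from-diagonal doubling point $z$) and Lemma~\ref{lem:conti}, the same quadratic cutoff profile with $|\nabla\sqrt{\eta}|\le(1+\delta)/\delta$, the lower bound $C_1\ge m(2\alpha)$ on $\supp\eta_1$, the absorption of the $z$-side Fisher cost via the doubling property with $r_2\sim\beta$, and the choice $\delta\sim 1/N$ giving $C(\delta)\sim N^2$ for the implicit $\alpha_0$. It is correct at the level of precision the paper itself operates at; the only deviation is that your constant tracking (the extra factors from $\tfrac1{1-\eta_i}\le 2$ and from $|\nabla\eta|^2/\eta=4|\nabla\sqrt{\eta}|^2$) is deferred to ``bookkeeping,'' which matches the looseness already present in the paper's displayed estimates.
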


\begin{proof} Let $y \in D_{\alpha}$ and without loss of generality we can assume that $|y_1-y_2|<\alpha$; let $z$ given by Lemma \ref{lem:exdoubling}. We then consider $r_1, r_2, \eta_1, \eta_2, \mathbb{P}_1, \mathbb{P}_2, \bar{\mathbb{P}}$ defined by \eqref{eqn:etas},\eqref{eqn:rs}, \eqref{eqn:marginals}, \eqref{eqn:P1P2} and \eqref{eqn:Pbar}; being $\bar{\mathbb{P}} \in \Pi(\rho)$, we get, by the minimality of $\mathbb{P}$,
$$ \mathcal{F}_{LL,\ep} (\bar{\mathbb{P}} ) \geq  \mathcal{F}_{LL,\ep} (\mathbb{P}), $$
$$ \ep \Fish (\bar{\mathbb{P}}) + v_{ee}(\bar{\mathbb{P}}) \geq  \ep \Fish (\mathbb{P}) + v_{ee}(\mathbb{P}); $$
now we can use the estimates in Lemma \ref{lem:conti} in order to conclude that 
$$\ep\int \mathbb{P} \left(  \frac{| \nabla \eta_1|^2}{\eta_1} +  \frac{| \nabla \eta_2|^2}{\eta_2} +  \frac{|\nabla \eta_1+ \nabla \eta_2|^2}{1-\eta_1- \eta_2}\right) \, dx \geq\int \mathbb{P} (\eta_1+\eta_2) C_1 dx - \int (\mathbb{P}_1+\mathbb{P}_2) C_1 dx .$$

Now we make the choice $\eta(x)= \min\left\{\frac{(1+\delta)(1-|x|)_+}{\delta}, 1\right\} ^2$. In particular $0 \leq \eta \leq  1$ and $\eta \equiv 1$ if $|x|<\frac 1{1+\delta}$, and moreover $|\nabla \sqrt{\eta}| \equiv \frac{1+\delta}\delta  $ if $\frac 1{1+\delta}\leq |x| \leq 1$ and $0$ otherwise. We thus have

\begin{align*}
 \int \mathbb{P} \frac{| \nabla \eta_1|^2}{\eta_1} &= \frac {(1+\delta)^2}{\delta^2r_1^2} \int_{ B (y, r_1) \setminus B(y, r_1/(1+\delta))} \mathbb{P} \lambda_1 \, dx  =  \frac {(1+\delta)^2}{\delta^2r_1^2} \left(  \int_{B(y, r_1)} \mathbb{P} \lambda_1\, dx - \int_{B(y,r_1/(1+\delta))} \mathbb{P} \lambda_1 \right)\\
 & \leq  \frac {(1+\delta)^2}{\delta^2r_1^2} \left(  \int_{B(y, r_1)} \mathbb{P} \lambda_1\, dx - m \right)
 \end{align*}
 In a similar way we have
 
 $$ \int \mathbb{P} \frac{| \nabla \eta_2|^2}{\eta_2} \leq \frac {(1+\delta)^2}{\delta^2 r_2^2} \left(  \int_{B(z, r_2)} \mathbb{P} \lambda_2\, dx - m \right) \leq   \frac {(1+\delta)^2 \cdot  (2(1+\delta)^{3N}-1)}{\delta^2r_2^2} \cdot m = \frac{C(\delta)}{r_2^2} \cdot m.$$
where in the last steps we used Lemma \ref{lem:exdoubling} (ii) and the definition of $C(\delta)$ \eqref{eqn:Cdelta}.  Now we use that $\int C_1 \mathbb{P} \eta_1  \, dx \geq m(\alpha + 2r_1) \cdot m$ and the estimates we have for $\int C_1(\mathbb{P}_1+\mathbb{P}_2) \, dx  $ to get 
$$\int \mathbb{P} (\eta_1+\eta_2) C_1 dx - \int (\mathbb{P}_1+\mathbb{P}_2) C_1 dx \geq [m (\alpha +2r_1)   - 2(N-1)M(\beta -r_1 - 2r_2)] \cdot m.$$
Putting everything together we have
$$ \frac {\ep(1+\delta)^2}{\delta^2r_1^2} \int_{B(y,r_1)} \lambda_1 \mathbb{P}(x)\, dx \geq \left[ m (\alpha +2r_1)   - 2(N-1)M(\beta -r_1 - 2r_2) - \frac {\ep C(\delta)}{r_2^2} + \frac {\ep(1+\delta)^2}{\delta^2r_1^2} \right] \cdot m.$$
We can now use use $m \geq \int_{B(y,r_1/(1+\delta))} \lambda_1 \mathbb{P}(x) \, dx$ and, dividing $\lambda_1$, we can write the inequality as
\begin{equation}\label{eqn:F} \int_{B(y,r_1/(1+\delta))}   \mathbb{P} (x) \, dx \leq  \frac { 1}{ \frac{\delta^2 r_1^2 F(r_1,\ep, \alpha) }{(1+\delta)^2\ep} + 1 }\int_{B(y,r_1)}   \mathbb{P}(x)\, dx,\end{equation}
where 
\[F(r_1, \ep, \alpha) := \max \left\{ m (\alpha +2r_1)   - 2(N-1)M(\beta -r_1 - 2r_2) - \frac {\ep C(\delta)}{r_2^2} \; : \; r_2>0\right\}.\]
 Now we can choose $r_1 \leq \alpha/2 \leq \beta/4$ and $r_2= \beta/8$, and then choose $\alpha< \alpha_0$ such that 
\begin{equation}\label{eqn:conditionsalpha} \frac{m (2\alpha)}{2}   - 2(N-1)M(\beta/2) > \frac{m(2 \alpha)}4 \quad \text{ and }\quad  \frac{m (2\alpha)}{2}   - \frac {\ep C(\delta)}{r_2^2} > \frac{m(2 \alpha)} 4.\end{equation}
In this way we have $F(r_1, \ep, \alpha) \geq m(2 \alpha)/2$: plugging this estimate in \eqref{eqn:F} we get precisely
$$ \int_{B(y,r_1/(1+\delta))} \mathbb{P} (x) \, dx \leq  \frac { 1}{ \delta^2 r_1^2 \frac{m(2\alpha)}{2(1+\delta)^2\ep} + 1 }\int_{B(y,r_1)}  \mathbb{P}(x)\, dx.$$
In order to understand for which $\alpha$ and $\delta$ this inequality holds, we have to ensure that the two conditions \eqref{eqn:conditionsalpha} are satisfied, that is
\begin{equation}\label{eqn:newcondalpha}
m(2\alpha) \geq \max \left\{ 8 (N-1) M\Bigl( \frac{\beta}2\Bigr),  256 \frac{\ep C(\delta)}{\beta^2} \right\};
\end{equation}
notice that $\alpha_0$ can be characterized as maximal $\alpha$ for which there exists some $\delta$ for which \eqref{eqn:newcondalpha} is satisfied that is when $C(\delta)$ as small as possible, which is approximately achieved for $\delta=\frac 2{3N}$. With this choice we have $C(2/(3N)) \leq 26 N^2$ and thus
\begin{equation}\label{eqn:newcondalpha2}
m(2\alpha_0) \geq 8 \max \left\{  (N-1) M\Bigl( \frac{\beta}2\Bigr),  850 \frac{\ep N^2}{\beta^2} \right\}.
\end{equation}

\end{proof}

We will now iterate the estimate in Proposition \ref{prop:decay}

\begin{thm}\label{thm:estimates} Let us consider $\rho$ and $\beta$ such that $\kappa(\rho, \beta) < \frac 1{4(N-1)}$. Then let us consider $\alpha < \alpha_0$ (as in Proposition \ref{prop:decay}) and suppose $A:= \frac { \alpha^2 m(2\alpha)}{8\ep} \gg N^2$. Then if $\mathbb{P}$ minimizes \eqref{eqn:FLLe} we have that 
$$ \int_{ D_{ \alpha/2}} \mathbb{P}(x)\, dx \leq e^{-\frac 16\sqrt{\frac { \alpha^2 m(2\alpha)}{8\ep}}}  \int_{ D_{2\alpha} } \mathbb{P}(x) \, dx. $$
\end{thm}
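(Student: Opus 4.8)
The plan is to iterate the one–step decay of Proposition~\ref{prop:decay} at a geometric ratio $1+\delta$ and then convert the resulting family of ball estimates into an estimate on $D_{\alpha/2}$ by a Fubini argument. The key is to tune $\delta\asymp 1/\sqrt A$: with such a small $\delta$ it takes of order $\sqrt A$ steps of the iteration before the radius shrinks even by a constant factor, so a single block of $\sim\sqrt A$ iterations keeps all radii comparable to $\alpha$ while accumulating a gain $e^{-\Theta(\sqrt A)}$. Concretely, fix $\delta:=c_\star/\sqrt A$ for a suitable absolute constant $c_\star$ (one checks $c_\star=4$ works) and set $c:=\delta^2 m(2\alpha)/\bigl(2(1+\delta)^2\ep\bigr)$, so the constant appearing in \eqref{eqn:decay1} at radius $r_1$ is $(1+c\,r_1^2)^{-1}$. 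This $\delta$ is admissible in Proposition~\ref{prop:decay}, i.e.\ $m(2\alpha)>256\,\ep\,C(\delta)/\beta^2$: since $A\gg N^2$ we have $3N\delta\ll1$, hence $C(\delta)\le 6/\delta^2$, while $\alpha<\alpha_0$ forces $\alpha<\beta/4$ (because $m(2\alpha)>m(2\alpha_0)>8(N-1)M(\beta/2)\ge m(\beta/2)$ and $m$ is decreasing), and combining these with $\ep A=\alpha^2 m(2\alpha)/8$ reduces the requirement to $\beta^2>12\alpha^2$.

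Put $R_k:=(\alpha/2)(1+\delta)^{-k}$ and let $K$ be the least integer with $(1+\delta)^K\ge2$, so that $\alpha/\bigl(4(1+\delta)\bigr)<R_K\le\alpha/4$ and $K\asymp\sqrt A$. Since every $R_k\le\alpha/2$, Proposition~\ref{prop:decay} applies around any $y\in D_\alpha$ at each of the radii $R_0,\dots,R_{K-1}$, and composing the $K$ inequalities gives, for all $y\in D_\alpha$,
\[
\int_{B(y,R_K)}\mathbb{P}\,dx\ \le\ \Theta\int_{B(y,\alpha/2)}\mathbb{P}\,dx,\qquad \Theta:=\prod_{k=0}^{K-1}\frac{1}{1+cR_k^2}.
\]
Now integrate over $y\in D_\alpha$ and apply Fubini to both sides (both double integrals being finite, bounded by $\omega_{3N}R_K^{3N}$ and $\omega_{3N}(\alpha/2)^{3N}$ respectively, where $\omega_{3N}$ is the volume of the unit ball of $\R^{3N}$). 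One has $\int_{D_\alpha}\int_{B(y,R_K)}\mathbb{P}=\int\mathbb{P}(x)\,|D_\alpha\cap B(x,R_K)|\,dx\ge \omega_{3N}R_K^{3N}\int_{D_{\alpha-2R_K}}\mathbb{P}$, because $x\in D_{\alpha-2R_K}$ implies $B(x,R_K)\subseteq D_\alpha$ (triangle inequality on the two close coordinates); likewise $B(x,\alpha/2)\cap D_\alpha\ne\emptyset$ implies $x\in D_{2\alpha}$, so $\int_{D_\alpha}\int_{B(y,\alpha/2)}\mathbb{P}\le \omega_{3N}(\alpha/2)^{3N}\int_{D_{2\alpha}}\mathbb{P}$. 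Since $\alpha-2R_K\ge\alpha/2$ and $(\alpha/2)/R_K<2(1+\delta)$, this gives
\[
\int_{D_{\alpha/2}}\mathbb{P}\,dx\ \le\ \int_{D_{\alpha-2R_K}}\mathbb{P}\,dx\ \le\ \bigl(2(1+\delta)\bigr)^{3N}\,\Theta\int_{D_{2\alpha}}\mathbb{P}\,dx.
\]

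It remains to estimate $\Theta$. Write $cR_k^2=B_0(1+\delta)^{-2k}$ with $B_0:=\delta^2A/(1+\delta)^2$; for $c_\star=4$ one has $B_0\in[8,16]$, and since $(1+\delta)^{K-1}<2$ every factor satisfies $cR_k^2>B_0/4\ge2$, hence $\log(1+cR_k^2)>\log3$. Thus $-\log\Theta\ge K\log3\ge \frac{\log2}{\log(1+\delta)}\log3\ge \frac{(\log2)(\log3)}{c_\star}\sqrt A=:c_1\sqrt A$ with $c_1>\tfrac16$. Plugging this in, $\int_{D_{\alpha/2}}\mathbb{P}\le\exp\bigl(3N\log(2(1+\delta))-c_1\sqrt A\bigr)\int_{D_{2\alpha}}\mathbb{P}$, and since $A\gg N^2$ the overhead $3N\log(2(1+\delta))\le 3N\log3$ is swallowed by the slack $(c_1-\tfrac16)\sqrt A$, yielding $\int_{D_{\alpha/2}}\mathbb{P}\le e^{-\frac16\sqrt A}\int_{D_{2\alpha}}\mathbb{P}$ as claimed.

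The main obstacle is the passage from ball estimates to the estimate on $D_{\alpha/2}$: Proposition~\ref{prop:decay} only controls balls centred in $D_\alpha$, and naively covering $D_{\alpha/2}$ by balls of the terminal radius $R_K$ would cost an overlap factor $\sim(\alpha/R_K)^{3N}$. The Fubini identity circumvents this, but only because $\delta\asymp1/\sqrt A$ keeps $R_K$ of the same order as $\alpha$ despite the $\sim\sqrt A$ iterations; one must then check that the exponential gain from those iterations outweighs both $(2(1+\delta))^{3N}$ and the loss in inflating $D_{\alpha-2R_K}$ back to $D_\alpha$, which is precisely the role of the hypothesis $A\gg N^2$. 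A secondary point is that verifying the admissibility condition of Proposition~\ref{prop:decay} for this small $\delta$ is where $\alpha<\alpha_0$ (hence $\alpha$ small relative to $\beta$) gets used.
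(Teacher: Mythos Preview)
Your proof is correct and follows essentially the same route as the paper's: choose $\delta\asymp A^{-1/2}$, iterate Proposition~\ref{prop:decay} along the geometric sequence of radii $R_k=(\alpha/2)(1+\delta)^{-k}$ until the radius has shrunk by a bounded factor, and then pass from ball estimates to the $D_\gamma$ estimate by integrating over $y\in D_\alpha$ and applying Fubini, absorbing the $(O(1))^{3N}$ overhead via $A\gg N^2$. The only differences are cosmetic---the paper takes $\delta^2 A=e^2$ and stops when $(1+\delta)^{k_0+1}\approx e$, whereas you take $\delta^2 A=16$ and stop at $(1+\delta)^K\ge 2$---and your verification of the admissibility condition $m(2\alpha)>256\,\ep\,C(\delta)/\beta^2$ via $\alpha<\beta/4$ is slightly more explicit than the paper's.
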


\begin{proof}

Let us consider $\delta$ such that $\delta^2A = e^2$. By the hypothesis on $A$ we have $\delta \ll 1/N$; in particular, by \eqref{eqn:Cdelta} we can estimate $C(\delta) \leq \frac 2{\delta^2}$, and then it is easy to see that $m(2\alpha) >256\ep C(\delta)/\beta^2$ and thus we can apply Proposition \ref{prop:decay} with $r_1=\alpha_k=\frac  \alpha2 (1+\delta)^{-k}$ to obtain for every $y \in D_{\alpha}$

\begin{equation} \int_{ B(y,\alpha_{k+1}) } \mathbb{P}(x)\, dx \leq \frac 1{\delta^2 \alpha^2 \frac{ m(2\alpha)}{8(1+\delta)^{2k+2} \ep} +1 } \int_{ B(z, \alpha_k) } \mathbb{P}(x) \, dx \leq  \frac{ (1+\delta)^{2k+2}}{e^2} \int_{ B(y, \alpha_k) } \mathbb{P}(x) \, dx.\end{equation}
We can now iterate the estimate for $k=0, \ldots, k_0$ where $(1+\delta)^{2k_0+2} \leq e^2 \leq (1+\delta)^{2k_0+4}$. At that point we have
\begin{align*} \int_{ B(y,\alpha/2e) } \mathbb{P}(x)\, dx &\leq \int_{ B(y,\alpha_{k_0+1}) } \mathbb{P}(x)\, dx \leq \frac{ (1+\delta)^{(k_0+1)(k_0+2)}}{(e^2)^{k_0+1}} \int_{ B(y, \alpha_0) } \mathbb{P}(x) \, dx \\& \leq e^{-k_0} \int_{ B(y, \alpha/2) } \mathbb{P}(x) \, dx.\end{align*}
Integrating this inequality for $y \in D_{\alpha}$ we get
\begin{align*}
\omega_{3N} \left( \frac {\alpha}{2e} \right)^{3N} \int_{D_{\alpha/2}} \mathbb{P} (y) \, dy & \leq \int_{D_{\alpha}} \int_{ B(y,\alpha/2e) } \mathbb{P}(x)\, dx \, dy \\  &\leq e^{-k_0}  \int_{D_{\alpha}} \int_{ B(y,\alpha/2) } \mathbb{P}(x)\, dx \, dy  \\ & \leq e^{-k_0} \omega_{3N}  \left( 2\alpha  \right)^{3N}\int_{D_{2\alpha}} \mathbb{P}(y) \, dy.
\end{align*}
Now we notice that $k_0+2 \geq \frac{\ln ( e^2)} {2\ln(1+\delta)} \geq \frac {2}{4\delta} = \frac {\sqrt{A}}{2e}$ and so $e^{-k_0} \leq 10 e^{- \frac{\sqrt{A}}{2e}}$. In particular
$$ \int_{D_{\alpha/2}} \, \mathbb{P} (y) \, dy \leq 10e^{ - \frac{\sqrt{A}}{2e} + 3N\ln(4e) }  \int_{D_{2\alpha}} \, \mathbb{P} (y) \, dy;$$
notice that since $A \gg N^2$ we have $\frac{\sqrt{A}}{2e} - 3N\ln(4e) \geq \frac{ \sqrt{A}}6$.
\end{proof}

\begin{proof}(Theorem \ref{thm:mainThm} and Theorem \ref{thm:mainThm2}) First we notice that if $\psi_{\ep}$ is a minimizer for \eqref{Fll} then $\mathbb{P}_{\ep}  = |\psi_{\ep}|^2$ is a minimizer for \eqref{eqn:FLLe}. Then we notice that if $m(2\alpha)\leq 8(N-1)M( \beta/2)$ and  $\ep N^2 \ll \alpha^2 m(2\alpha)$, we have also $\alpha < \alpha_0$ and so we can apply Theorem \ref{thm:estimates}. From that we finish using that $\mathbb{P}_{\ep}$ is a probability density and so $\int_{D_{2\alpha}} \mathbb{P}_{\ep}(y) \, dy \leq 1$. The conclusions for Theorem \ref{thm:mainThm} are then implied by using $m(t)=M(t)=1/t$.
\end{proof}

\noindent{\textbf{Acknowledgement.}} Part of this work has been developed during a Research in Pairs program where the authors were hosted at the MFO (Mathematisches Forschungsinstitut Oberwolfach) in January 2017.  S.D.M. is a member of GNAMPA (INdAM). A.G. acknowledges partial support of his research by the Canada Research Chairs Program and Natural Sciences and Engineering Research Council of Canada. L.N. is partially on academic leave at Inria (team Matherials) for the year 2022-2023 and acknowledges the hospitality if this institution during this period. His work was supported by a public grant as part of the Investissement d'avenir project, reference ANR-11-LABX-0056-LMH, LabEx LMH and  from H-Code, Université Paris-Saclay. The authors want to thank M. Lewin for comments (and references) on a preliminary draft of the paper.



\bibliographystyle{plain}
\bibliography{biblio}

\begin{thebibliography}{10}

\bibitem{BenCarNen-SMCISE-16}
Jean-David Benamou, Guillaume Carlier, and Luca Nenna.
\newblock A numerical method to solve multi-marginal optimal transport problems
  with coulomb cost.
\newblock pages 577--601, 2016.

\bibitem{BinPas-17}
Ugo Bindini and Luigi De~Pascale.
\newblock Optimal transport with {C}oulomb cost and the semiclassical limit of
  density functional theory.
\newblock {\em J. \'{E}c. polytech. Math.}, 4:909--934, 2017.

\bibitem{BinDepKau-arxiv-20}
Ugo Bindini, Luigi De~Pascale, and Anna Kausamo.
\newblock On {S}eidl-type maps for multi-marginal optimal transport with
  {C}oulomb cost.
\newblock {\em arXiv preprint arXiv:2011.05063}, 2020.

\bibitem{ButChaPas-17}
G.~{Buttazzo}, T.~{Champion}, and L.~{De Pascale}.
\newblock Continuity and estimates for multimarginal optimal transportation
  problems with singular costs.
\newblock {\em Appl. Math. Optim.}, August 2017.

\bibitem{ButPasGor-12}
Giuseppe Buttazzo, Luigi {De Pascale}, and Paola Gori-Giorgi.
\newblock Optimal-transport formulation of electronic density-functional
  theory.
\newblock {\em Phys. Rev. A}, 85:062502, Jun 2012.

\bibitem{CheFri-MMS-15}
Huajie Chen and Gero Friesecke.
\newblock Pair densities in density functional theory.
\newblock {\em Multiscale Modeling \& Simulation}, 13(4):1259--1289, 2015.

\bibitem{ColPasMar-15}
Maria Colombo, Luigi {De Pascale}, and Simone {Di Marino}.
\newblock Multimarginal optimal transport maps for one-dimensional repulsive
  costs.
\newblock {\em Canad. J. Math.}, 67:350--368, 2015.

\bibitem{ColMar-15}
Maria Colombo and Simone {Di Marino}.
\newblock Equality between {M}onge and {K}antorovich multimarginal problems
  with {C}oulomb cost.
\newblock {\em Ann. Mat. Pura Appl. (4)}, 194(2):307--320, 2015.

\bibitem{ColMarStr-19}
Maria Colombo, Simone Di~Marino, and Federico Stra.
\newblock Continuity of multimarginal optimal transport with repulsive cost.
\newblock {\em SIAM J. Math. Anal.}, 51(4):2903--2926, 2019.

\bibitem{ColMarStr-21_ppt}
Maria {Colombo}, Simone {Di Marino}, and Federico {Stra}.
\newblock {First order expansion in the semiclassical limit of the Levy-Lieb
  functional}.
\newblock {\em arXiv e-prints}, page arXiv:2106.06282, June 2021.

\bibitem{ColStr-M3AS-15}
Maria Colombo and Federico Stra.
\newblock Counterexamples in multimarginal optimal transport with coulomb cost
  and spherically symmetric data.
\newblock {\em Mathematical Models and Methods in Applied Sciences},
  26(06):1025--1049, 2016.

\bibitem{CotFriKlu-13}
Codina Cotar, Gero Friesecke, and Claudia Kl{\"u}ppelberg.
\newblock Density functional theory and optimal transportation with {C}oulomb
  cost.
\newblock {\em Comm. Pure Appl. Math.}, 66(4):548--599, 2013.

\bibitem{CotFriKlu-18}
Codina Cotar, Gero Friesecke, and Claudia Kl\"{u}ppelberg.
\newblock Smoothing of transport plans with fixed marginals and rigorous
  semiclassical limit of the {H}ohenberg-{K}ohn functional.
\newblock {\em Arch. Ration. Mech. Anal.}, 228(3):891--922, 2018.

\bibitem{CoyEhrLom-19}
Rafael Coyaud, Virginie Ehrlacher, Damiano Lombardi, et~al.
\newblock Approximation of optimal transport problems with marginal moments
  constraints.
\newblock Technical report, 2019.

\bibitem{CycFroKirSim-87}
H.~L. Cycon, R.~G. Froese, W.~Kirsch, and B.~Simon.
\newblock {\em Schr{\"o}dinger operators with application to quantum mechanics
  and global geometry}.
\newblock Texts and Monographs in Physics. Springer-Verlag, Berlin, study
  edition, 1987.

\bibitem{Pascale-15}
Luigi De~Pascale.
\newblock Optimal transport with {C}oulomb cost. {A}pproximation and duality.
\newblock {\em ESAIM Math. Model. Numer. Anal.}, 49(6):1643--1657, 2015.

\bibitem{MarGerNen-17}
Simone {Di Marino}, Augusto Gerolin, and Luca Nenna.
\newblock {\em Optimal Transportation Theory with Repulsive Costs}, volume
  ``Topological Optimization and Optimal Transport in the Applied Sciences'' of
  {\em Radon Series on Computational and Applied Mathematics}, chapter~9, pages
  204--256.
\newblock De Gruyter, June 2017.

\bibitem{DiMGerNenSeiGor-xxx-15}
Simone Di~Marino, Augusto Gerolin, Luca Nenna, Michael Seidl, and Paola
  Gori-Giorgi.
\newblock in preparation.

\bibitem{MarLewNen-19_ppt}
Simone Di~Marino, Mathieu Lewin, and Luca Nenna.
\newblock Grand-canonical optimal transport.
\newblock {\em arXiv preprint arXiv:2201.06859}, 2022.

\bibitem{Friesecke-19}
Gero Friesecke.
\newblock A simple counterexample to the {M}onge ansatz in multimarginal
  optimal transport, convex geometry of the set of {K}antorovich plans, and the
  {F}renkel--{K}ontorova model.
\newblock {\em SIAM Journal on Mathematical Analysis}, 51(6):4332--4355, 2019.

\bibitem{FriGerGor-arxiv-22}
Gero Friesecke, Augusto Gerolin, and Paola Gori-Giorgi.
\newblock The strong-interaction limit of density functional theory.
\newblock {\em arXiv preprint arXiv:2202.09760}, 2022.

\bibitem{FriSchVoe-21}
Gero Friesecke, Andreas~S. Schulz, and Daniela V\"ogler.
\newblock Genetic column generation: Fast computation of high-dimensional
  multi-marginal optimal transport problems.
\newblock {\em to appear in SIAM J. Sci. Comp., arXiv preprint:
  arXiv:2103.12624}, 2021.

\bibitem{GerGroGor-JCTC-20}
Augusto Gerolin, Juri Grossi, and Paola Gori-Giorgi.
\newblock Kinetic correlation functionals from the entropic regularisation of
  the strictly-correlated electrons problem.
\newblock {\em Journal of Chemical Theory and Computation}, 16(1):488--498,
  2019.

\bibitem{GerKauRaj-ESAIMCOCV-19}
Augusto Gerolin, Anna Kausamo, and Tapio Rajala.
\newblock Duality theory for multi-marginal optimal transport with repulsive
  costs in metric spaces.
\newblock {\em ESAIM: Control, Optimisation and Calculus of Variations}, 25:62,
  2019.

\bibitem{KhoLinLinLex-19_ppt}
Yuehaw Khoo, Lin Lin, Michael Lindsey, and Lexing Ying.
\newblock Semidefinite relaxation of multi-marginal optimal transport for
  strictly correlated electrons in second quantization, 2019.

\bibitem{Kiessling-12}
Michael K.-H. Kiessling.
\newblock The {H}artree limit of {B}orn's ensemble for the ground state of a
  bosonic atom or ion.
\newblock {\em J. Math. Phys.}, 53(9):095223, 2012.

\bibitem{LanMarGerLeeGor-PCCP-16}
G.~Lani, S.~Di~Marino, A.~Gerolin, R.~van Leeuwen, and P.~Gori-Giorgi.
\newblock The adiabatic strictly-correlated-electrons functional: kernel and
  exact properties.
\newblock {\em Phys. Chem. Chem. Phys.}, 18:21092--21101, 2016.

\bibitem{Levy-79}
Mel Levy.
\newblock Universal variational functionals of electron densities, first-order
  density matrices, and natural spin-orbitals and solution of the
  $v$-representability problem.
\newblock {\em Proc. Natl. Acad. Sci. U. S. A.}, 76(12):6062--6065, 1979.

\bibitem{Lewin-18}
Mathieu Lewin.
\newblock Semi-classical limit of the {L}evy-{L}ieb functional in {D}ensity
  {F}unctional {T}heory.
\newblock {\em C. R. Math. Acad. Sci. Paris}, 356(4):449--455, 2018.

\bibitem{lewin2019universal}
Mathieu Lewin, Elliott~H Lieb, and Robert Seiringer.
\newblock Universal functionals in density functional theory.
\newblock {\em arXiv preprint arXiv:1912.10424}, 2019.

\bibitem{Lieb-83b}
Elliott~H. Lieb.
\newblock Density functionals for {C}oulomb systems.
\newblock {\em Int. J. Quantum Chem.}, 24:243--277, 1983.

\bibitem{LieLos-01}
Elliott~H. Lieb and Michael Loss.
\newblock {\em Analysis}, volume~14 of {\em Graduate Studies in Mathematics}.
\newblock American Mathematical Society, Providence, RI, 2nd edition, 2001.

\bibitem{MarGer-19_ppt}
Simone~Di Marino and Augusto Gerolin.
\newblock An optimal transport approach for the schrödinger bridge problem and
  convergence of sinkhorn algorithm, 2019.

\bibitem{MirUmrMorGor-JCP-14}
A.~Mirtschink, C.~J. Umrigar, J.~D. Morgan~III, and P.~Gori-Giorgi.
\newblock Energy density functionals from the strong-coupling limit applied to
  the anions of the he isoelectronic series.
\newblock {\em J. Chem. Phys.}, 140(18):18A532, 2014.

\bibitem{Rougerie-cdf}
Nicolas Rougerie.
\newblock {\em Th{\'e}or{\`e}mes de de {F}inetti, limites de champ moyen et
  condensation de {B}ose-{E}instein}.
\newblock Spartacus-idh, Paris, 2016.
\newblock Cours Peccot au Coll{\`e}ge de France (2014).

\bibitem{SeiMarGerNenGieGor-17}
M.~{Seidl}, S.~{Di Marino}, A.~{Gerolin}, L.~{Nenna}, K.~J.~H. {Giesbertz}, and
  P.~{Gori-Giorgi}.
\newblock The strictly-correlated electron functional for spherically symmetric
  systems revisited.
\newblock {\em ArXiv e-prints}, February 2017.

\bibitem{Seidl-99}
Michael Seidl.
\newblock Strong-interaction limit of density-functional theory.
\newblock {\em Phys. Rev. A}, 60:4387--4395, Dec 1999.

\bibitem{SeiGorSav-07}
Michael Seidl, Paola Gori-Giorgi, and Andreas Savin.
\newblock Strictly correlated electrons in density-functional theory: A general
  formulation with applications to spherical densities.
\newblock {\em Phys. Rev. A}, 75:042511, Apr 2007.

\bibitem{SeiPerLev-99}
Michael Seidl, John~P. Perdew, and Mel Levy.
\newblock Strictly correlated electrons in density-functional theory.
\newblock {\em Phys. Rev. A}, 59:51--54, Jan 1999.

\bibitem{VucGer-2022}
Stefan Vuckovic, Augusto Gerolin, Timothy~J Daas, Hilke Bahmann, Gero
  Friesecke, and Paola Gori-Giorgi.
\newblock Density functionals based on the mathematical structure of the
  strong-interaction limit of dft.
\newblock {\em Wiley Interdisciplinary Reviews: Computational Molecular
  Science}, page e1634, 2022.

\end{thebibliography}

%

\end{document}